\documentclass[journal]{IEEEtran}

\hyphenation{op-tical net-works semi-conduc-tor}

\usepackage[utf8]{inputenc}
\usepackage{enumitem}
\usepackage{graphicx,tikz,subfigure,epsfig,psfrag,color,epstopdf,soul,enumitem}
\usepackage{amsfonts,amsmath,amssymb,bm,stmaryrd}
\usepackage{nth,csquotes}
\usepackage{mathtools}
\usepackage[ruled,lined]{algorithm2e}
\usepackage{cite}

\newcommand{\argmin}[2]{\smash{\mathop{{\rm arg\,min}}\limits_{#1}}\ #2 }
\newcommand{\ts}[1]{\bm{\mathcal{#1}}}
\graphicspath{{fig//}}

\newenvironment{proof}[1][Proof]{\begin{trivlist}
\item[\hskip \labelsep {\bfseries #1}]}{\end{trivlist}}

\newlength{\textlarg}

\definecolor{green}{RGB}{34,139,34}

\makeatletter
\def\rank{\mathop{\operator@font rank}\nolimits}
\def\krank{\mathop{\operator@font krank}\nolimits}
\def\vecd{\mathop{\operator@font vecd}\nolimits}
\def\diag{\mathop{\operator@font diag}\nolimits}
\def\cov{\mathop{\operator@font cov}\nolimits}
\def\imagin{\mathop{\operator@font Im}\nolimits}

\makeatother
\newcommand{\eqdef}{\stackrel{\mathrm{def}}{=}} 
\renewcommand{\t}{{\mathsf T}} 
\newcommand{\h}{{\mathsf H}} 
\newcommand{\kron}{\boxtimes}  
\newcommand{\con}{\mathop{\bullet}\limits} 
\newcommand{\out}{\mathop{\otimes}\limits}  

\newcommand{\va}{\mathbf{a}}

\newcommand{\vb}{\mathbf{b}}

\newcommand{\vc}{\mathbf{c}}
\newcommand{\vd}{\mathbf{d}}

\newcommand{\vy}{\mathbf{y}}

\newcommand{\mA}{\mathbf{A}}

\newcommand{\mD}{\mathbf{D}}

\newcommand{\mM}{\mathbf{M}}

\newcommand{\mQ}{\mathbf{Q}}

\newcommand{\mT}{\mathbf{T}}
\newcommand{\mU}{\mathbf{U}}
\newcommand{\mSigm}{\bm{\Sigma}}
\newcommand{\mV}{\mathbf{V}}

\newcommand{\mY}{\mathbf{Y}}






 %

\newcommand{\fin}{\color{black}}
\newcommand{\red}{\color{black}}

\newcommand{\soul}{\color{black}}
\newcommand{\mv}[1]{\mathbf{#1}}
\newcommand{\mr}[1]{\mathrm{#1}}

\newcommand{\vv}{\mathbf{v}}


\newcommand{\argtimes}[2]{\con_{#1}^{#2}}

\newtheorem{theorem}{Theorem}
\newtheorem{corollary}{Corollary}
\newtheorem{experiment}{Experiment}

\newtheorem{lemma}{Lemma}

\sloppy

\definecolor{myblue}{rgb}{.8, .8, 1}
  
\definecolor{myorange}{rgb}{0.9961,    0.6250,    0.4766}
   
 
\newlength{\largeur}
 \setlength{\largeur}{.74\linewidth} 
 
\sloppy
\begin{document}

\title{A Simultaneous Sparse Approximation Method for Multidimensional Harmonic Retrieval}

\author{Souleymen~Sahnoun*,
        El-Hadi~Djermoune,~\IEEEmembership{Member,~IEEE},
        David~Brie,~\IEEEmembership{Member,~IEEE},
        Pierre Comon,~\IEEEmembership{Fellow,~IEEE}%
    \thanks{This work is funded by the European Research Council under the European Community's Seventh Framework Programme FP7/2007--2013 Grant Agreement no.~320594, DECODA project.}%
    \thanks{S. Sahnoun and P. Comon are with the Gipsa-Lab, 11 rue des Math\'ematiques, Domaine Universitaire BP 46, 38402 Saint Martin d'H\`eres Cedex, France (e-mail: firstname.name@gipsa-lab.grenoble-inp.fr).}%
    \thanks{E.-H. Djermoune and D. Brie are with the Centre de Recherche en Automatique de Nancy, Universit\'e de Lorraine, CNRS, Boulevard des Aiguillettes, BP 239, 54506 Vandoeuvre Cedex, France (phone: +33(0)383 68 44 72; fax: +33(0)383 68 44 61; e-mail: firstname.name@univ-lorraine.fr).}%
    \thanks{*Corresponding author.}%
}



\maketitle

\begin{abstract}
In this paper, a \soul sparse-based \fin method for the estimation of the parameters of multidimensional ($R$-D) modal (harmonic or damped) complex signals in noise is presented. The problem is formulated as $R$ simultaneous sparse approximations of multiple 1-D signals. To get a method able to handle large size signals while maintaining a sufficient resolution, a multigrid dictionary refinement technique is associated with the simultaneous sparse approximation problem.  The refinement procedure is proved to converge in the single $R$-D mode case. Then, for the general multiple modes $R$-D case, the signal tensor model is decomposed in order to handle each mode separately in an iterative scheme. The proposed method does not require an association step since the estimated modes are automatically ``paired''. We also derive the Cram\'er-Rao lower bounds of the parameters of modal $R$-D signals. The expressions are given in compact form in the single $R$-D mode case. Finally, numerical simulations are conducted to demonstrate the effectiveness of the proposed method.
\end{abstract}

\begin{IEEEkeywords}
Multidimensional modal retrieval, frequency estimation, simultaneous sparse approximation, multigrid dictionary refinement, Cram\'er-Rao lower bound, harmonic retrieval
\end{IEEEkeywords}

 \ifCLASSOPTIONpeerreview
 \begin{center} \bfseries EDICS Category: SSP-SPEC, SSP-PARE \end{center}
 \fi
%
\IEEEpeerreviewmaketitle

\section{Introduction}\label{sec:intro}

\IEEEPARstart{T}{he} problem of estimating the parameters of sinusoidal signals from noisy measurements is an important topic in signal processing and several parametric and nonparameteric approaches have been developed for one-dimensional (1-D) signals~\cite{Stoica:97}. Recently, this problem has received a renewed interest thanks to the emergence of multidimensional ($R$-D) applications. Indeed, parameter estimation from $R$-D signals is required in numerous applications in signal processing and communications such as nuclear magnetic resonance (NMR) spectroscopy, wireless communication channel estimation~\cite{gershman2005space} and MIMO radar imaging~\cite{Nion:10}. In all these applications, signals are assumed to be a superposition of $R$-D sinusoids or, more generally, of exponentially decaying $R$-D complex exponentials (modal signals). As for the 1-D case, the crucial step is the estimation of the $R$-D modes (including frequencies and damping factors) because they are nonlinear functions of the data.

In order to achieve high resolution estimates, parametric approaches are often preferred to nonparamatric ones. Several parametric $R$-D methods ($R\geq 2$) have been proposed. They include linear prediction-based methods such as 2-D TLS-Prony~\cite{Sacchini93}, and subspace approaches such as matrix enhancement and matrix pencil (MEMP)~\cite{Hua92}, 2-D ESPRIT~\cite{Rouquette2001}, multidimensional folding (MDF)~\cite{mokios20043d}, improved multidimensional folding (IMDF)~\cite{Lui2006,Lui07_november}, Tensor-ESPRIT~\cite{haardt2008higher}, principal-singular-vector utilization for modal analysis (PUMA)~\cite{sun2012accurate,so2010efficient} and the methods proposed in \cite{huang2012multidimensional,lin2013efficient}. All these methods perform at various degrees but it is generally admitted that they yield accurate estimates at high SNR scenarios and/or when the frequencies are well separated. This is obtained at the expense of computational effort. For instance, MDF, IMDF, 2-D ESPRIT and MEMP are ESPRIT-type techniques. They require to build large size matrices and apply the ESPRIT-based method, which make their computational complexity very high particularly in the case of large $R$-D signals. The Tensor-ESPRIT algorithm uses the structure inherent in the $R$-D data at the expense of a high computational complexity. 
Recently, TPUMA~\cite{sun2012accurate} was proposed as an accurate and computationally efficient multidimensional harmonic retrieval method, which attains the Cram\'er-Rao lower bound (CRLB) and does not require to build large size matrix or tensor. However its performance degrades rapidly with the increase of the number of components present in the $R$-D signal.



 \fin 
 Recently, methods based on sparse approximations have been proposed to address the harmonic or modal retrieval problem~\cite{Sahnoun:2012Eusipco,goodwin1999matching,Malioutov2005,Stoica:2011spice1,sahnoun2012_Mltigrid,Sahnoun:13,sward2014high,adalbjornsson2014high}. \fin 
 For time-data spectral estimation, the dictionary is formed from a set of (normalized) complex exponentials potentially embedded in the data, which allows one to easily include some prior knowledge about the position of certain known modes. More generally, the usual choice is a uniform spectral grid obtained by sampling the frequency and damping factor lines. Clearly, a fine grid will lead to a good resolution but, on the other hand, it will result in a huge dictionary \cite{Sahnoun:2012Eusipco}. This complexity is further increased in the case of $R$-D signals in which we are confronted with $2R$-D grids.

\soul In order to reduce the computational burden, a multigrid scheme for sparse approximation was proposed in \cite{sahnoun2012_Mltigrid}  to iteratively refine the dictionary starting from a coarse one. At each iteration, a sparse approximation is performed and then new grid points (atoms) are inserted in the vicinity of active ones leading to a multiresolution-like scheme. \soul So the multigrid algorithm in \cite{sahnoun2012_Mltigrid} refines jointly $R~2$-D grids. This algorithm is efficient but has mainly two drawbacks: 1) it does not have convergence guarantees, 2) the dictionary becomes intractable for large signals when $R\geq 2$.

\fin

The goal of the present paper is to propose a fast multidimensional modal estimation technique able to handle large signals and yielding a good estimation accuracy. 
\begin{enumerate}[leftmargin = .5cm]
\item First, the proposed approach, as for some parametric methods for modal retrieval, is based on the idea of estimating the parameters independently along each dimension $r=1,\ldots,R$. It will be shown that the \emph{simultaneous} sparse approximation concept~\cite{Tropp:06,Sahnoun:13} is well-suited for $R$-D modal retrieval ($R\geq 2$).  
\item \soul The second contribution consists in the proposition of a new multigrid scheme which amounts to consider a  two-step refinement of 1-D grids, the first step for frequencies and the second one for damping factors. One advantage of the two-step multigrid is that it reduces the computational time. The convergence issue of the proposed multigrid strategy is analyzed firstly for single tone $(F=1)$ case, and convergence conditions are derived.  Condition for convergence are expressed in terms of atom positions in the initial dictionaries. 
\item The extension of this result to the multiple tones case ($F>1$) is not trivial because, not only it depends on the selected sparse approximation algorithm, but also on the coherence of the dictionary~\cite{Tropp:06}. Indeed, due to the refinement strategy, the resulting dictionary is far from being uncorrelated which may prevent convergence even in the noiseless case. Consequently, for the multiple tones case, we exploit an alternative representation of the data model enabling the extraction of the $R$-D signal tones separately. Therefore, the third contribution of this paper consists in deriving a new algorithm for estimating parameters of $R$-D damped signals in which the results of the previous contribution apply. \red The effectiveness of the new algorithm for multiple $R$-D tones is also analyzed. \soul One very interesting by-product of this approach is that the pairing of $R$-D parameters  is achieved for free, without any further association stage. 
\end{enumerate}
 \fin

To assess the performances of an estimation method, the usual way consists in comparing the variance of the estimates to the CRLB. \soul In \cite{Hua92} Y. Hua derived the CRLB for 2-D frequencies, i.e., undamped 2-D exponentials; no damped signals are considered. Closed-form expressions of the CRLB for the general undamped $R$-D case are derived in \cite{boyer2008deterministic}. CRLB for 2-D damped signals are derived in \cite{Clark1994}. Therefore, \fin to the best of our knowledge, no compact expressions of the CRLB's are available for the general $R$-D modal (damped) signal. Thus, another contribution of this paper is the derivation of the CRLB's for the frequency, damping factor, amplitude and phase of the $R$-D modal signal. 

The remainder of this paper is organized as follows. In section~\ref{sec:problem_statment}, we introduce notation and present the $R$-D modal retrieval problem. In section~\ref{sec:simultaneous}, we formulate the $R$-D modal estimation problem as $R$ simultaneous sparse estimation problems, show how to construct a modal dictionary on a uniform grid and then \soul describe the new fast multigrid  strategy. \fin In section~\ref{sec:singleRD}, we give sufficient conditions for convergence of the multigrid dictionary refinement scheme in the case of single tone $R$-D signals. In light of these new results, we propose in section~\ref{sec:multipleRD} a new efficient algorithm for multiple tones $R$-D modal signals. In section~\ref{sec:crlb}, we derive the expressions of the CRLB's for the parameters of $R$-D damped exponentials in Gaussian white noise. We then give the CRLB in the cases of single damped  and  undamped $R$-D cisoids. The effectiveness of the proposed method is demonstrated using simulation signals in section~\ref{sec:simul}. Finally, conclusions are drawn in section~\ref{sec:conclusion}.

\section{Notation and Problem Statement}\label{sec:problem_statment}

\subsection{Notation}
In this paper, scalars are denoted as lower-case letters $(a,b,\alpha)$, column vectors as lower-case bold-face letters $(\mv{a},\mv{b})$, matrices as bold-face capitals $(\mv{A},\mv{B})$, and tensors as calligraphic bold-face letters $(\bm{\mathcal A}, \bm{\mathcal B})$. Let $(\cdot)^\t,(\cdot)^\h$ and $(\cdot)^{\dagger}$ denote the transpose, the Hermitian transpose and the pseudo-inverse, respectively.  The symbols \enquote{$\odot$} and  \enquote{$\kron$}  will denote the Khatri-Rao product (column-wise Kronecker)  and the Kronecker product, respectively.  Both words ``mode'' and ``tone'' are used to refer to a component of the multidimensional signal. 
 The tensor operations used here are consistent with~\cite{Como14:spmag}:
\begin{enumerate}[label = $\bullet$, leftmargin = .5cm]
\item the outer product of two tensors $\bm{\mathcal A} \in{\mathbb{C}^{M_1\times \cdots \times M_R}} $ and $\bm{\mathcal B} \in{\mathbb{C}^{K_1\times \cdots \times K_N}} $ is given by:
\begin{align}
	&\bm{\mathcal C} = \bm{\mathcal A}  \out  \bm{\mathcal B} \in{\mathbb{C}^{M_1\times \cdots \times M_R \times K_1\times \cdots \times K_N}}, \nonumber\\
	& c(m_1,\ldots, m_{R},k_1,\ldots ,k_N) = \nonumber \\ 
	&\hspace{1.8cm} a(m_1,\ldots, m_{R})b(k_1,\ldots ,k_N)
\end{align}
\item the contraction product acting on the $r$th index of a tensor  $\bm{\mathcal A} \in{\mathbb{C}^{M_1\times \cdots \times M_R}} $ and the $2$nd index of a matrix $\mv{U} \in \mathbb{C}^{K\times M_r}$ is:
 \begin{align}
 & \bm{\mathcal B} = \bm{\mathcal A} \con_r \mv{U} \in {\mathbb{C}^{M_1\times \cdots \times M_{r-1} \times K\times M_{r+1} \times \cdots \times M_R}}, \nonumber \\
 	& b(m_1,m_2,\ldots, m_{r-1},k_r,m_{r+1},\ldots ,m_R)\nonumber =\\ & \hspace{1.5cm}\sum_{m_r=1}^{M_r} a(m_1,m_2,\ldots, m_{R})u(k_r,m_r)
 \end{align}
 \item the matrix $\mv{A}_{(r)} \in \mathbb{C}^{M_r\times (M_1\cdots M_{r-1}M_{r+1}\cdots M_R)}$ represents the unfolding (dimension-$r$ matricization) of the tensor $\bm{\mathcal A}$ and corresponds to the arrangement of the dimension-$r$ fibers of $\bm{\mathcal A}$ to be the columns of the resulting matrix.
 \item \soul $\|\ts{A}\|^2$ denotes the the Frobenius norm for tensors: $\| \ts{A}\|^2 = \sum_{m_1,\ldots,m_R} |a({m_1,\ldots,m_R})|^2$.   \fin
\end{enumerate}

\subsection{Problem Formulation}

An $R$-D modal signal is modeled as the superposition of $F$ multidimensional damped complex sinusoids:
\begin{equation}
\tilde{y}(m_1, \ldots, m_R) = \sum_{f=1}^{F}c_f \prod_{r=1}^{R}a_{f,r}^{m_r-1}+e(m_1, \ldots, m_R) \label{eq:rd_signal_model}
\end{equation}
where $m_r=1,\ldots, M_r$ for $r=1,\ldots,R$. $M_r$ denotes the sample support of the $r$th dimension, $a_{f,r}=\exp{(\alpha_{f,r}+j \omega_{f,r})} \in \mathbb{C}$ is the $f$th mode in the $r$th dimension, $\{ \alpha_{f,r}\}_{f=1, r=1}^{F,R}$, $\alpha_{f,r}\in\mathbb{R}^-$, are the damping factors, $\{ \omega_{f,r}=2\pi\nu_{f,r}\}_{f=1, r=1}^{F,R}$ are the angular frequencies, and $c_f=\lambda_f\exp(j\phi_f)$ is the complex amplitude of the $f$th mode where $\lambda_f=|c_f|$ denotes the magnitude and $\phi_f$ the phase. 
 $e(m_1,m_2,\ldots,m_R)$ is a zero-mean complex Gaussian white noise with variance $\sigma^2$ and mutually independent components in all dimensions. Throughout this paper, the tilde symbol (~$\tilde{}$~) denotes a noisy signal; \emph{e.g.} $\tilde{y}(\cdot)=y(\cdot)+e(\cdot)$.

In a tensor form, the $R$-D signal in (\ref{eq:rd_signal_model}) may be written as
\begin{equation}
{\ts{\widetilde{Y}}}= \bm{\mathcal{Y}}+ \bm{\mathcal{E}} \label{eq:tensor_form}
\end{equation}
where $\{\ts{\widetilde{Y}}, \bm{\mathcal{Y}}, \ts{E}\} \in \mathbb{C}^{M_1\times M_2 \times\cdots\times M_R}$. The problem consists in estimating the set of parameters $\{a_{f,r}\}_{f=1,r=1}^{F,R}$ and $\{c_{f}\}_{f=1}^{F}$ from the $R$-D signal samples.

\section{Simultaneous Sparse Approximation for $R$-D Modal Signals}\label{sec:simultaneous}

\subsection{Tensor Formulation of the Data Model}

The noise-free data tensor $\ts{Y}$ in (\ref{eq:tensor_form}) can be written in the following form:
\begin{equation}\label{eq:Y-a-i-r}
\bm{\mathcal{Y}} = \sum\limits_{f=1}^{F} c_f~\mv{a}_{f,1} \out \mv{a}_{f,2} \out \cdots \out\mv{a}_{f,R}
\end{equation}
where  $\va_{f,r}=[1, a_{f,r}, \ldots, a_{f,r}^{M_r-1}]^\t$, $r=1,\ldots,R$.  Equation (\ref{eq:Y-a-i-r}) is called the Canonical Polyadic (CP) decomposition form,  or the Candecomp/Parafac decompostion  of the tensor $\ts{Y}$~\cite{Como14:spmag,Kolda:2009tensor}. The CP model (\ref{eq:Y-a-i-r}) can be concisely denoted by
\begin{align}
\ts{Y} = \llbracket \mv{c}; \mv{A}_1, \mv{A}_2, \ldots, \mv{A}_R \rrbracket
\end{align}
where $\mv{A}_r= [\mv{a}_{1,r},\mv{a}_{2,r}, \ldots, \mv{a}_{F,r} ]$, $r=1,\ldots,R$, and $\mv{c}=[c_1,c_2,\ldots, c_F]^\t$ is the vector of complex amplitudes. Using these definitions, the matricized form of $\ts{Y}$ along the $r$th dimension is given by
\begin{align}
\mv{Y}_{(r)}= \mv{A}_{r} \bm{\Delta}_{\mv{c}} (\mv{A}_{R} \odot \cdots \odot \mv{A}_{r+1} \odot \mv{A}_{r-1} \odot \cdots \odot \mv{A}_{1})^\t
\end{align}
where $\bm{\Delta}_{\mv{c}} = \text{diag}(\mv{c})$. Then, we can write
\begin{equation}\label{eq:unfoldingYtilde}
\mv{\widetilde{Y}}_{(r)}= \mv{A}_{r} \mv{H}_{r}+ \mv{E}_{(r)}
\end{equation}
where $\mv{H}_{r} \in \mathbb{C}^{F\times M_{r}'}$ is
\begin{equation}
\mv{H}_{r} \eqdef \bm{\Delta}_{\mv{c}} (\mv{A}_{R} \odot \cdots \odot \mv{A}_{r+1} \odot \mv{A}_{r-1} \odot \cdots \odot \mv{A}_{1})^\t
\end{equation}
and $M_{r}' = \prod\limits_{\begin{smallmatrix}k=1\\ k \neq r\end{smallmatrix}}^R M_k$. Therefore
\begin{align}
\mv{Y}_{(r)} & \eqdef [\mv{y}_{(r),1},\ldots, \mv{y}_{(r),M_{r}'}] \nonumber \\
	&= \left[ \sum\limits_{f=1}^F h_r(f,1)\mv{a}_{f,r}, \ldots, \sum\limits_{f=1}^F h_r(f,M_r')\mv{a}_{f,r}\right] \label{eq:Yr-him}
\end{align}
where $h_r(f,m_r')$ is the $(f,m_r')$ entry of the matrix $\mv{H}_{r}$, for $f=1,\ldots,F$ and $m_r'=1,\ldots,M_r'$. We observe that, for a given $r$, the columns $\mv{y}_{(r),m_r'}$ of $\mv{Y}_{(r)}$ are linear combinations of the vectors $\{\mv{a}_{f,r}\}_{f=1}^F$. Hence, the columns $\mv{y}_{(r),m_r'}$ can be considered as multiple experiences involving the same one-dimensional signal generated by the modes $a_{f,r}, f = 1, \ldots, F,$ but with different amplitudes for each experience. This property will be used in the next section to formulate the problem of estimating the mode coordinates in the $r$th dimension as a simultaneous sparse approximation problem.

\vspace{-3mm}
\subsection{Simultaneous Sparse Approximation}\label{subsec:simul_sparse}

\soul Assuming\footnote{Note that this assumption is considered only in this section.} that $M_r> F,\forall r$, it is easy to see from (\ref{eq:Yr-him}) that for a fixed $r$ the mode coordinates $\{a_{f,r}\}_{f=1}^{F_r}$ ($F_r\leq F$) in the $r$th dimension are identifiable from any column of $\mv{Y}_{(r)}$. \fin This process can also be repeated on each dimension $r=1,\ldots,R$ to get all the modes coordinates. In practice, we have to replace the matrix $\mv{Y}_{(r)}$ by its noisy counterpart $\mv{\widetilde{Y}}_{(r)}$ accounting for the additive white noise. In this case, (\ref{eq:Yr-him}) holds only approximately. Consequently, for each column $\tilde{\mv{y}}_{(r),m_r'}, m_r'=1,\ldots,M_r'$, the modal estimation problem can be formulated as a sparse approximation problem corresponding to the following constrained optimization:
\begin{equation}
\mv{x}_{m_r'}= \arg \min_{\mv{x}} \| \mv{x}\|_0 \quad\text{subject to}\quad \|\tilde{\mv{y}}_{(r),m_r'}-\mv{Q}_r\mv{x} \|_{2}^2\leq \epsilon
\end{equation}
where $\mv{Q}_r\in\mathbb{C}^{M_r\times N}$, $N\gg M_r$, is a (known) modal dictionary, $\mv{x}\in\mathbb{C}^{N}$ is a (sparse) vector containing the coefficients of the activated columns in $\mv{Q}_r$, and $\epsilon$ is a small reconstruction error related to the noise variance. The pseudo-norm $\|\mv{x}\|_0$ counts the number of nonzero elements in a vector $\mv{x}$. The design of $\mv{Q}_r$ is discussed in section~\ref{sec:dictionary}. The fact that each vector  $\tilde{\mv{y}}_{(r),m_r'}$  corresponds to a 1-D signal generated by the same modes implies that the position of nonzero entries in $\mv{x}_{m_r'}$ should be the same for $m_r'=1,2,\ldots,M_R'$. Let $\mv{X}$ be the matrix defined by
\begin{equation}
\mv{X}= [\mv{x}_1, \mv{x}_2, \ldots, \mv{x}_{M_r'}],
\end{equation}
then the sparsity of $\mv{X}$ may be measured by computing the Euclidian norms of the rows; those providing a nonzero norm define the rows of the activated atoms (which are estimations of modes $a_{f,r}$ in the dimension $r$) in the dictionary $\mv{Q}_r$. Therefore, we are facing a simultaneous sparse approximation problem:
\begin{equation}\label{eq:argminX}
\mv{X}_r = \arg \min_{\mv{X}} \| \mv{X}\|_{2,0} \quad\text {subject to}\quad \|\widetilde{\mv{Y}}_{(r)}-\mv{Q}_r\mv{X} \|_{F}^2\leq \epsilon
\end{equation}
where
\begin{align}
\| \widetilde{\mv{Y}}_{(r)}  -\mv{Q}_r\mv{X} \|_{F}^2 &= \|\mathrm{vec}( \widetilde{\mv{Y}}_{(r)}  -\mv{Q}_r\mv{X}) \|_{2}^2,\\
\| \mv{X}\|_{2,0} &= \left\| \begin{bmatrix} \|\mv{X}^{1,:}\|_2& \cdots &\|\mv{X}^{N,:}\|_2 \end{bmatrix}^\t \right\|_0,
\end{align}
and $\mv{X}^{n,:}$ stands for the $n$th row of $\mv{X}$. The simultaneous sparse representation models, called also Multiple Measurement Vectors (MMV), have been studied from several angles of view, and different approaches have been proposed~\cite{Rakotomamonjy2011surveying}, using greedy strategies~\cite{Tropp:04} such as Simultaneous Orthogonal Matching Pursuit (SOMP)~\cite{Tropp:06}, convex relaxation methods~\cite{Tropp:16}, randomized algorithms such as REduce MMV and BOost (ReMBo)~\cite{Mishali:08} and subspace-augmented MUSIC~\cite{Lee:12}. 
As the goal of the present paper is to develop a fast approach well adapted to large signals, we restrict our attention to the SOMP algorithm~\cite{Tropp:06}, \soul reported in Appendix~\ref{appendix_SOMP}. \fin However, it is worth mentioning that, in more intricate cases and/or small size signals, much more efficient simultaneous sparse algorithms may be used at the price of an increased computational burden. A straightforward way to get the $R$-tuples $\{(a_{f,1},\ldots,a_{f,R})\}_{f=1}^F$ consists in estimating the modes ${a_{f,r}}$ in the $R$ dimensions using matrices $\mv{\widetilde{Y}}_{(r)}, r= 1, \ldots, R$, which requires a further pairing step to form the $R$-D modes in the multiple tones case ($F >1$). To get accurate estimates using the described scheme, two conditions have to be satisfied, 1) the dictionary should contain all possible modes present in the signal, 2)  the sparse approximation method should have sufficient guarantees for selecting the true atoms from the dictionary, which is known as ``exact recovery guarantees''. These problems are discussed in the following sections and an alternative representation of the data is used to avoid the pairing stage in the multiple tones case.

\subsection{Modal Dictionary Design and Multigrid Strategy} \label{sec:dictionary}
\subsubsection{Uniform Modal Dictionary}

The dictionary $\mv{Q}_r\in\mathbb{C}^{M_r\times N}$ can be defined as follows. Let $N_\mu$ be the number of points of a uniform grid covering the frequency interval $[0,1)$. Similarly, let $N_\beta$ be the number of points of a uniform grid covering the damping factor interval $(\beta_{\min}, 0]$, where $\beta_{\min}$ is a lower bound on $\{\alpha_{f,r}\}_{f=1}^{F}$. Then $\mv{Q}_r$ is given by
\begin{align}
	\mv{Q}_r =[ &\mv{q}_r(0,0), \ldots, \mv{q}_r((N_\mu-1)\delta_{\mu}, 0),  \mv{q}_r(0, \delta_{\beta}), \ldots, \nonumber \\ &\mv{q}_r((N_\mu-1)\delta_{\mu}, \delta_{\beta}), \ldots, \mv{q}_r((N_\mu-1)\delta_{\mu},(N_\beta-1)\delta_{\beta})]
\end{align}
where $\mv{q}_r(\mu, \beta)=\mv{a}_r(\mu, \beta)/||\mv{a}_r(\mu,\beta)||_2$ with $\mv{a}_r(\mu, \beta)= [1, e^{(\beta+j 2 \pi \mu)}, \ldots, e^{(\beta+j 2 \pi \mu)(M_r-1)}]^\t$, 
$\delta_{\beta}= \beta_{\min}/N_\beta$, and $\delta_{\mu}=1/N_\mu$. In short, $\mv{Q}_r$ is obtained from a discretization of the $(\nu,\alpha) $ plane. Each point of the grid corresponds to a hypothetic mode. The total number of columns in $\mv{Q}_r$ is $N=N_\mu N_\beta\gg F$, each of them is called atom. In the aim of reducing the computational complexity, we propose to estimate frequencies and then damping factors by calling twice the sparse approximation method. At the first step, the frequencies are estimated using a harmonic dictionary. In the second step, the damping factors are estimated using a modal dictionary formed by the already estimated frequencies and a damping factor grid. These two steps are explained in section~\ref{sec:singleRD}.

\subsubsection{Multi-Grid Dictionary Refinement} \label{subsec:multigrid}
To achieve a high-resolution modal estimation, a possible way is to define uniform grids as before and selecting very small values for $\delta_\mu$ and $\delta_\beta$ to retrieve the frequencies and damping factors, respectively. As a consequence, the resulting dictionaries will lead to prohibitive calculation cost and memory capacities requested. Rather, we propose to start with a coarse one ($N_\mu$ and $N_\beta$ low) and to adaptively refine it through a multigrid scheme. The procedure is the same for estimating the frequency and damping factor. The principle is sketched on Figure~\ref{fig_multi}. The main idea is the adaptation of the dictionary as a function of the previous dictionary and the estimated coefficients. Let $\ell$ be the current grid level ($\ell=0,\ldots,L-1$). At level $\ell$, we first restore the signal $\mathbf{X}_r(\ell)$ related to the dictionary $\mathbf{Q}_r(\ell)$ by applying the SOMP method. Then we refine the dictionary by inserting atoms inbetween pairs of $\mathbf{Q}_r(\ell)$, in the neighborhood of each activated atom and we apply again the SOMP method at level $\ell+1$ to restore $\mathbf{X}_r(\ell+1)$ with respect to the refined dictionary $\mathbf{Q}_r(\ell+1)$. This process is repeated until a desired level of resolution is reached. Algorithm~\ref{algo:dec_refined} presents the one-step dictionary refinement (DICREF), from level $\ell$ to $\ell+1$, where, for $a$ and $b$ reals, $\mr{linspace}(a,b,\eta)$ generates a  set of $\eta$ equispaced points  in the interval $[a,b]$. 
\soul The difference between the present framework and that in \cite{sahnoun2012_Mltigrid} is the following. In  \cite{sahnoun2012_Mltigrid} the multigrid algorithm refines jointly $R~2$-D grids, which leads to expensive computations when $R\geq 2$, without convergence guarantees. The present mutigrid scheme refines linear grids, which leads to low computational complexity with convergence guarantees as we will show in the next section.  \fin

Finding the convergence conditions of the \soul new \fin multigrid strategy in the general case (multiple tones) is not easy and depends on the selected sparse approximation algorithm. By contrast, it is possible to show that, under mild conditions, the convergence may be guaranteed in the single tone case. This issue is discussed in the next section. In section~\ref{sec:multipleRD}, we make use of an alternative representation of the data model in the case of multiple tones and a method allowing one to retrieve the signal tones separately.


\begin{figure}[t]
\centering
%
%
%
%

\begin{tikzpicture}[xscale=1.2,yscale=1.1]
\draw[->] (0,0) -- (0,3.5) node[above] {\footnotesize Level};
\draw[->,gray] (0,0) -- (5,0) node[below,black] {\footnotesize $\mu$ or $\beta$};
\draw[->,gray] (0,1.5) -- (5,1.5) node[below,black] {\footnotesize $\mu$ or $\beta$};
\draw[->,gray] (0,2.5) -- (5,2.5) node[below,black] {\footnotesize $\mu$ or $\beta$};
\draw[very thick,dotted] (2.5,.5) -- (2.5,1);
\draw[very thick,dotted] (2.5,3) -- (2.5,3.5);

\foreach \y/\ytext in {0,1.5/$\ell$,2.5/$\ell+1$}
 \draw (-.1,\y) node[left] {\footnotesize \ytext};
 
\foreach \x in {0,1,...,4}
 \foreach \y in {0,1.5,2.5}
  \draw (\x,\y) circle (0.06cm);
\foreach \x in {.75, 1.25,2.5,3.5} {
 \draw (\x,1.5) circle (0.06cm);
 \draw (\x,2.5) circle (0.06cm);
}

\foreach \x in {1,2,3} \fill (\x,0) circle (0.03cm);
\foreach \x in {.5,1.5,2.75} {
 \draw (\x,1.5) circle (0.06cm);
 \fill (\x,1.5) circle (0.03cm);
 \draw (\x,2.5) circle (0.06cm);
}

\foreach \x in {.25,.625,1.375,1.75,2.625,2.875} \draw[red] (\x,2.5cm+.1pt) node {\footnotesize $\boldsymbol\varoast$};
\end{tikzpicture}
\caption{The multigrid dictionary refinement procedure with $\eta=1$. ($\varbigcirc$) atoms in the dictionary; ($\mathrlap{\hspace*{2.2pt}\bullet}\varbigcirc$) activated atoms; (\textcolor{red}{$\boldsymbol\varoast$}) new atoms}
\label{fig_multi}
\end{figure}

\IncMargin{0mm}\begin{algorithm}[t]
\caption{ Dictionary refinement $(\mr{DICREF})$} \label{algo:dictionary_refinement}
\footnotesize
\DontPrintSemicolon
\SetKwData{Left}{left}\SetKwData{This}{this}\SetKwData{Up}{up} \SetKwFunction{Union}{Union}
\SetKwFunction{FindCompress}{FindCompress} \SetKwInOut{Input}{input}\SetKwInOut{Output}{output}
\Input {A vector $\vd \in \mathbb{R}^N$ of sorted frequencies or damping factors, an index set $\Omega$ of activated atoms, the  number of atoms $\eta \in \mathbb{N}$ to add at each side of an activated one}
\Output{Updated vector $\vd_{\mr{updated}}$}
\BlankLine
	\For{$i=1:\rm{numel}(\Omega) $}{
		$\vd_{i,1} =  \mr{linspace}\left(\vd(\Omega(i)-1), \vd(\Omega(i)), \eta \right)$\; 
		$\vd_{i,2}= \mr{linspace}(\vd(\Omega(i)), \vd(\Omega(i)+1), \eta )$\; 
		$\vd_{i} = [\vd_{i,1}^\t, \vd_{i,2}^\t(2:\eta) ]^\t$\; 
		}
	$\vd_{\mr{updated}} =  \mr{union}  (\vd_1, \ldots, \vd_{\mr{numel}(\Omega)})$ \;
	\Return{ $\vd_{\mr{updated}}$}
\label{algo:dec_refined}
\end{algorithm}

\section{Single $R$-D Mode Estimation} \label{sec:singleRD}


In the previous section, we have shown how the $R$-D modal retrieval problem may be tackled using a sparse approximation algorithm by estimating the set of parameters in each dimension $r=1,\ldots,R$. Here, we give the sufficient conditions for convergence of the multigrid dictionary refinement scheme for $F=1$. Without loss of generality, we set $R=1$. For notation simplicity, we omit reference to the dimension index $r$.

According to~(\ref{eq:rd_signal_model}), the 1-D modal signal containing a single mode can be written as follows:  
\begin{equation}
y(m) = c_1 a_{1}^{m-1} =  c_1 e^{(\alpha_1+j2\pi\nu_1)(m-1)}, m=1,\ldots,M.
\end{equation}
Let $\mQ$ be a normalized modal dictionary $\mQ=[\mathbf{q}_1,\ldots,\mathbf{q}_N]$, with
\begin{equation}
\mathbf{q}_n=\frac{1}{\sqrt{\sum_{m}|q_n|^{2m}}} [1,q_n,\ldots,q_n^{M-1}]^\t,
\end{equation}
$q_n=\exp(\beta_n+j2\pi\mu_n), \mu_n\in[0,1), \beta_n\in (\beta_{\min},0]$, for $n=1,\ldots,N$. The single tone sparse approximation of $\mathbf{y}$ with respect to $\mQ$ is the solution of the criterion:
\begin{equation}
\min_{\mathbf{x}} J(\mathbf{x})=||\mathbf{y}-\mQ\mathbf{x}||^2 \quad \text{s.t.} \quad ||\mathbf{x}||_0=1.
\end{equation}
The optimal solution is given by
\begin{equation}
x^*_n=\mathbf{q}_n^\h\mathbf{y}, \quad \mathbf{x}^*_{\{1,\ldots,N\}\backslash n}=\mathbf{0}, \quad J(\mathbf{x}^*)=||\mathbf{y}||^2-\mathbf{y}^\h\mathbf{q}_n\mathbf{q}_n^\h\mathbf{y}
\end{equation}
where $n$ is the selected column number in $\mQ$. Finally, the minimum $J(\mathbf{x}^*)$ is reached for an atom $\mathbf{q}_n$ that maximizes $J'(\mathbf{q}_n)=\mathbf{y}^\h\mathbf{q}_n\mathbf{q}_n^\h\mathbf{y}=|\mathbf{q}_n^\h\mathbf{y}|^2$, $n=1,\ldots,N$.

\subsection{Estimating the Frequency: The Harmonic Dictionary}  
First, we estimate frequency $\nu_1$ using a harmonic dictionary (i.e. assuming $\beta_n=0,\forall n$). In this case, we have:
\begin{equation}\label{eq:Jprimqn}
J'(\mu_n) = \frac{|c_1|^2}{M}\left|\frac{1-e^{\alpha_1M+j2\pi (\nu_1-\mu_n)M}}{1-e^{\alpha_1+j2\pi (\nu_1-\mu_n)}}\right|^2.
\end{equation}
The following theorem gives a sufficient condition for the multigrid dictionary refinement scheme to converge to the global maximum of $J'$.

\begin{theorem}
Let $y(m)$ be a single tone ($F=1$) noiseless signal of length $M$ and $\mathbf{Q}(\ell=0)=[\mathbf{q}_1~\mathbf{q}_2~\ldots~\mathbf{q}_{N(0)}]^\t$ be the initial harmonic dictionary in which the columns are sorted in increasing order of $\mu_n(0), n=1,2,\ldots,N(0)$ and covering the frequency interval [0,1): $\mu_1(0)=0$ and $\mu_{N(0)}(0)=1-1/M$. Then the refinement scheme is convergent (i.e. $\exists n\in\{1,\ldots,N(\ell)\}$ s.t. $\lim_{\ell\to\infty}\mu_n(\ell)=\nu_1$) if the following condition is satisfied:
\begin{equation}
\max_{n\in\{1,\ldots,N(0)-1\}} |\mu_{n+1}(0)-\mu_{n}(0)|<2\zeta_M
\end{equation}
where $\zeta_M$ is a constant depending only on $M$.
\end{theorem}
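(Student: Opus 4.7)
The plan is to exploit the Dirichlet-kernel structure of the criterion $J'(\mu_n)$ in~(\ref{eq:Jprimqn}) and show that, once the argmax over the current grid falls inside the main lobe of $J'$ around $\mu=\nu_1$, the refinement procedure can only move the argmax strictly closer to $\nu_1$.

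First, I would study the shape of $J'$ as a function of $\mu$ on $[0,1)$. Setting $f=\nu_1-\mu$, the expression in~(\ref{eq:Jprimqn}) is (up to a positive multiplicative constant) a Dirichlet-type kernel
$\sin^{2}(\pi f M)/\sin^{2}(\pi f)$ when $\alpha_1=0$, and a closely related kernel when $\alpha_1<0$. In either case it is $1$-periodic, attains its global maximum at $f=0$, is strictly monotone on $(0,1/M)$ and on $(-1/M,0)$, and its side-lobe peaks are bounded away from the main-lobe value (the leading side lobe has amplitude roughly a factor $1/(2.25\pi^2)$ smaller). This lets me define
\[
\zeta_M \eqdef \sup\Big\{\zeta\in(0,1/M) : J'(\nu_1\pm\zeta)\ge \max_{|\nu_1-\mu|\ge \zeta} J'(\mu)\Big\},
\]
which depends only on $M$ (and mildly on $\alpha_1$), is strictly positive, and has the property that on the open interval $I^*\eqdef(\nu_1-\zeta_M,\nu_1+\zeta_M)$, $J'$ is strictly larger than outside $I^*$ and is strictly decreasing in $|\nu_1-\mu|$.

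Next, I would use the gap condition. If $\max_n |\mu_{n+1}(0)-\mu_n(0)| < 2\zeta_M$, then a simple pigeonhole argument on the covering of $[0,1)$ by the initial grid (together with the boundary values $\mu_1(0)=0$ and $\mu_{N(0)}(0)=1-1/M$, and periodicity) guarantees that at least one atom $\mu_{n^*}(0)$ lies in $I^*$. By the defining property of $\zeta_M$, that atom has strictly larger $J'$ value than every atom outside $I^*$; hence the SOMP-selected atom at level $0$ also belongs to $I^*$. Moreover, monotonicity of $J'$ inside $I^*$ implies that the selected atom is precisely the closest grid point to $\nu_1$, so $\nu_1\in[\mu_{n^*-1}(0),\mu_{n^*+1}(0)]$.

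The core of the argument is then an induction on $\ell$. Assuming $\mu_{n^*(\ell)}(\ell)\in I^*$ and $\nu_1\in[\mu_{n^*(\ell)-1}(\ell),\mu_{n^*(\ell)+1}(\ell)]$, Algorithm~\ref{algo:dec_refined} inserts $\eta$ new equispaced atoms on each side of $\mu_{n^*(\ell)}(\ell)$. This produces at least one new atom strictly closer to $\nu_1$ than $\mu_{n^*(\ell)}(\ell)$; all previously active atoms remain in $\mathbf{Q}(\ell+1)$. Using monotonicity of $J'$ in $I^*$ and the definition of $\zeta_M$, the new argmax lies in $I^*$ as well, and again coincides with the closest grid point to $\nu_1$. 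Finally, the distance from the argmax to $\nu_1$ is upper bounded at level $\ell$ by $(\eta+1)^{-\ell}$ times the initial bracket length around $\nu_1$, which tends to $0$; hence $\mu_{n^*(\ell)}(\ell)\to \nu_1$.

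The delicate step is the construction of $\zeta_M$: one needs a constant that simultaneously (i) captures the half-width of a region where $J'$ is quasi-concave and strictly decreasing away from $\nu_1$, and (ii) dominates every side-lobe peak of $J'$, so that no grid point trapped in a side lobe can ever beat a main-lobe grid point. All other steps — pigeonhole, monotone selection of the closest atom, and geometric contraction of the bracket — are then essentially bookkeeping.
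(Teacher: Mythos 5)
Your proposal is correct and follows essentially the same route as the paper: define $\zeta_M$ as the half-width at which the main lobe of the Fej\'er-type kernel $J'$ decays to the level of the largest side lobe, use the gap condition to guarantee an initial atom inside $(\nu_1-\zeta_M,\nu_1+\zeta_M)$, and then argue by induction that each refinement keeps the selected atom in that interval while shrinking its distance to $\nu_1$. Your explicit geometric-contraction bound is in fact slightly tighter than the paper's ``monotonically decreasing distance'' claim; just note that, as in the paper, $\zeta_M$ should be fixed at its $\alpha_1=0$ value (the worst case, since damping only broadens the main lobe) so that it depends only on $M$ as the theorem requires.
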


\begin{proof}
It is easy to check that the global maximum of $J'(\mu_n)$ is reached for $\mu_n=\nu_1$, $\forall \alpha_1$. Figure~\ref{fig:conv1} shows the variation of $J'(\mu_n)$ as a function of $\mu_n$ for $\nu_1=0.1$, $\beta_n=0$ and $M=10$. For $\alpha_1=0$, $J'(\mu_n)$ reduces to a Fej\'er kernel which has exactly one local maximum in the interval $[\nu_1+k/M,\nu_1+(k+1)/M]$, $k\neq 0$. Let $J'_1$ be the maximum value of $J'(\mu_n)$ in the interval $[\nu_1+1/M,\nu_1+2/M]$ and $\nu_1+\zeta_M$ be the value of $\mu_n$ such that $J'(\mu_n=\nu_1+\zeta_M)=J'_1$ in the interval $[\nu_1,\nu_1+1/M]$ (we assume\footnote{The case of $M\leq2$ in not of practical interest but the theorem is still valid by setting $\zeta_M=\frac{1}{2}$ because $J'(\mu_n)$ is a monotonically decreasing function in the interval $[\nu_1,\min\{\frac{1}{2},\frac{1}{2}+\nu_1\}]$.} that $M>2$). For the dictionary refinement strategy to converge to the global maximum, it is sufficient to the sparse approximation algorithm to select, at a given level $\ell$, an atom whose frequency satisfies $|\mu_{n^*}(\ell)-\nu_1|<\zeta_M< 1/M$, where $\mu_{n^*}(\ell)=\arg\max_n J'(\mu_n)$. Indeed, if $\mu_{n^*}(\ell)\in(\nu_1-\zeta_M,\nu_1+\zeta_M)$ then adding two atoms whose frequencies are located on both sides of $\mu_{n^*}(\ell)$ will lead to the selection, at level $\ell+1$, of an atom that satisfies $|\mu_{n^*}(\ell+1)-\nu_1|\leq |\mu_{n^*}(\ell)-\nu_1|$: the distance between the selected atom and the true frequency is a monotonically decreasing sequence. Finally, the convergence is guaranteed if the initial dictionary contains  an atom $n$  such that $|\mu_n(0)-\nu_1|<\zeta_M$, which is satisfied if
\begin{equation}\label{eq:cond}
\max_{n\in\{1,\ldots,N(0)-1\}} |\mu_{n+1}(0)-\mu_{n}(0)|<2\zeta_M.
\end{equation}
given the fact that the sequence $\{\mu_n(0)\}$ covers the interval $[0,1)$. For $\alpha_1<0$, the main lobe of $J'(\mu_n)$ becomes broader and $\zeta_M$ larger than for $\alpha_1=0$. Consequently, condition~(\ref{eq:cond}) is also sufficient for $\alpha_1<0$.
\hfill $\blacksquare$
\end{proof}

\begin{corollary}
In the single tone case, the harmonic dictionary refinement is convergent if the initial frequency grid ($\ell=0$) is the Fourier grid.
\end{corollary}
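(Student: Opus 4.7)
The plan is to simply verify the hypothesis of the preceding Theorem for the specific case of the Fourier grid and read off convergence. The Fourier grid at level $\ell=0$ consists of $N(0)=M$ equispaced points $\mu_n(0)=(n-1)/M$, $n=1,\ldots,M$, so its maximum gap is $\max_n |\mu_{n+1}(0)-\mu_n(0)| = 1/M$. Convergence will therefore follow once I show $\zeta_M > 1/(2M)$, where $\zeta_M$ is the constant introduced in the Theorem.

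The strategy for this inequality is to exploit the explicit structure of $J'(\mu_n)$ from equation~(\ref{eq:Jprimqn}). Recall $\zeta_M$ was defined so that $J'(\nu_1+\zeta_M)$ equals the maximum $J'_1$ of $J'$ on the first sidelobe interval $[\nu_1+1/M,\nu_1+2/M]$, and since $J'$ is strictly decreasing on the main lobe $[\nu_1,\nu_1+1/M]$, proving $\zeta_M>1/(2M)$ is equivalent to proving $J'(\nu_1+1/(2M))>J'_1$. First I would treat the undamped case $\alpha_1=0$, where $J'$ reduces to a scaled Fej\'er kernel. A direct substitution gives
\begin{equation}
J'\!\left(\nu_1+\tfrac{1}{2M}\right)=\frac{|c_1|^2}{M^2 \sin^2(\pi/(2M))},
\end{equation}
while on $[\nu_1+1/M,\nu_1+2/M]$ the denominator of~(\ref{eq:Jprimqn}) is minimized (and hence $J'$ maximized) near $\mu_n=\nu_1+3/(2M)$, yielding the bound
\begin{equation}
J'_1 \leq \frac{|c_1|^2}{M^2 \sin^2(3\pi/(2M))}.
\end{equation}

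Because $\sin(3\pi/(2M))>\sin(\pi/(2M))$ for $M\geq 2$, the first expression strictly exceeds the second, which establishes $J'(\nu_1+1/(2M))>J'_1$ and hence $\zeta_M>1/(2M)$ in the undamped case. For $\alpha_1<0$, the Theorem already observed that the main lobe of $J'$ widens while the sidelobes are attenuated, so $\zeta_M$ can only grow; the inequality therefore still holds and the condition $1/M<2\zeta_M$ of the Theorem is satisfied. I expect the only mild subtlety to be the identification of the sidelobe maximizer used in the upper bound on $J'_1$; this is standard for the Fej\'er kernel, and worst-case monotonicity of $1/\sin^2$ on $(0,\pi/2)$ supplies the needed inequality without fine numerical work.
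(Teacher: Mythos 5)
Your proof follows the same route as the paper's: the Fourier grid has uniform spacing $1/M$, so the hypothesis of the Theorem reduces to checking $\zeta_M>1/(2M)$, which the paper simply asserts; you go further and attempt to derive that inequality from the Fej\'er-kernel form of $J'$, which is a worthwhile supplement. Two small points on that derivation. First, with $\alpha_1=0$, equation~(\ref{eq:Jprimqn}) gives $J'(\nu_1+\tfrac{1}{2M})=|c_1|^2/\bigl(M\sin^2(\pi/(2M))\bigr)$, i.e.\ the prefactor is $1/M$, not $1/M^2$; since the same slip appears in both of your expressions it does not affect the comparison. Second, your bound $J'_1\leq |c_1|^2/\bigl(M^2\sin^2(3\pi/(2M))\bigr)$ is not rigorous as stated: the first sidelobe maximizer lies slightly to the \emph{left} of $\nu_1+3/(2M)$, where the denominator $\sin^2(\pi(\mu_n-\nu_1))$ is smaller, so the true peak slightly exceeds the value at $3/(2M)$ — you flagged this yourself. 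The repair is immediate and even simpler: on $[\nu_1+1/M,\nu_1+2/M]$ bound the numerator of the Fej\'er kernel by $1$ and the denominator from below by its left-endpoint value $\sin^2(\pi/M)$, giving $J'_1\leq |c_1|^2/\bigl(M\sin^2(\pi/M)\bigr)<|c_1|^2/\bigl(M\sin^2(\pi/(2M))\bigr)=J'(\nu_1+\tfrac{1}{2M})$ because $\sin$ is increasing on $(0,\pi/2]$. With that adjustment your argument is complete and in fact supplies the proof of $\zeta_M>1/(2M)$ that the paper leaves implicit.
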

\begin{proof}
Fourier bins are obtained for $N=M$ and $\mu_n(0)=(n-1)/M$. Since $\zeta_M>1/2M$, the proof is straightforward because $|\mu_{n+1}(0)-\mu_{n}(0)|=1/M<2\zeta_M$.
 \hfill $\blacksquare$
\end{proof}

It is important to note that condition~(\ref{eq:cond}) is sufficient but not necessary. Moreover, this condition is established when adding a single atom on both sides of the selected one (i.e. $\eta=1$ in Algorithm~\ref{algo:dictionary_refinement}). When $\eta\gg 1$, the condition may be relaxed and the rate of convergence is expected to be higher.

\begin{figure}[t]
\centering
\psfrag{f}[cc][cb]{\tiny $\mu_n$}
\psfrag{J}[cc][lc]{\tiny $J'(\mu_n)/||\mathbf{y}||^2$}
\psfrag{zeta}[cc][cc]{\tiny $\zeta_M$}
\includegraphics[width=\largeur]{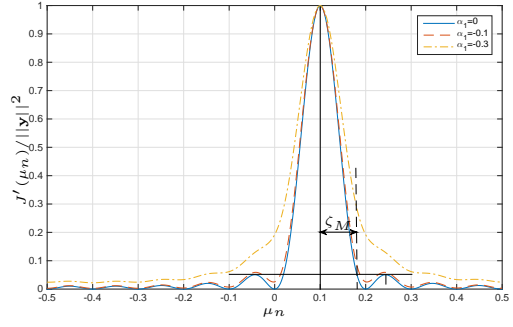}
\caption{$J'(\mu_n)$ in the single mode case with $\nu_1=0.1$ and $\beta_n=0$}
\label{fig:conv1}
\end{figure}

\subsection{Estimating the Damping Factor: The Modal Dictionary} 
Assume that the previous sparse approximation method using a harmonic dictionary has converged to \soul select \fin an atom with $\mu_n=\nu_1$. Now, we have to estimate the damping factor $\alpha_1$. We form a modal dictionary using the damping factor grid and the frequency $\nu_1$, i.e. $q_n = \exp(\beta_n + j2\pi \nu_1)$. Consequently,
\begin{align}\label{eq:Jdamp}
J'(\beta_n) 
 &= \frac{|c_1|^2(1-e^{2\beta_n})}{1-e^{2\beta_n M}}\left(\frac{1-e^{(\alpha_1+\beta_n) M}}{1-e^{(\alpha_1+\beta_n) }}\right)^2.
\end{align}
%
%

\begin{theorem}
\label{theorem:dampMiltigrid}
Let $y(m)$ be a single tone ($F=1$) noiseless signal of length $M$ and $\mathbf{Q}(0)=[\mathbf{q}_1~\mathbf{q}_2~\ldots~\mathbf{q}_{N(0)}]^\t$ be the initial modal dictionary formed using the frequency $\nu_1$, i.e., $q_n = \exp(\beta_n(0) + j2\pi\nu_1)$, where $\nu_1$ is the frequency of signal $\vy$. The columns are sorted in increasing order of $\beta_n(0), n=1,2,\ldots,N$ and covering the damping factor interval $(\beta_{\min}, 0]$. Then the refinement scheme is convergent (i.e. $\exists n$ s.t. $\lim_{\ell\to\infty}\beta_n(\ell)=\alpha_1$) if $\alpha_1 \in (\beta_{\min}, 0]$.
\end{theorem}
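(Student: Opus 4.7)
The plan is to reduce the convergence claim to a unimodality property of $J'(\beta)$ and then invoke a standard bracketing argument. Concretely, I would show that when $\mu_n=\nu_1$ is fixed to the true frequency, the map $\beta \mapsto J'(\beta)$ defined by (\ref{eq:Jdamp}) has a unique global maximum at $\beta=\alpha_1$ on $(\beta_{\min},0]$, with strict monotonicity on each side. Unlike the frequency case, which is oscillatory and imposes the initial-grid condition $|\mu_{n+1}(0)-\mu_n(0)|<2\zeta_M$, the damping-factor criterion has no side lobes, so any initial grid whose range covers $\alpha_1$ will suffice; this is why the theorem's only hypothesis is $\alpha_1 \in (\beta_{\min},0]$.

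To establish unimodality, I would adopt the compact representation
\[
J'(\beta) \;=\; |c_1|^2\,\frac{s(\alpha_1+\beta)^2}{s(2\beta)}, \qquad s(\gamma) \eqdef \sum_{m=0}^{M-1} e^{\gamma m},
\]
which can be read off from (\ref{eq:Jdamp}). Taking $\log J'$ and differentiating, the stationarity condition becomes $h(\alpha_1+\beta)=h(2\beta)$, where $h(\gamma)=s'(\gamma)/s(\gamma)$. Interpreting $p_m(\gamma)=e^{\gamma m}/s(\gamma)$ as a probability weight on $\{0,\ldots,M-1\}$, the quantity $h(\gamma)$ is the corresponding expected index and $h'(\gamma)$ is its variance, hence strictly positive for $M>1$. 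Thus $h$ is strictly increasing, so $h(\alpha_1+\beta)=h(2\beta)$ if and only if $\beta=\alpha_1$. The same monotonicity fixes the sign of $(d/d\beta)\log J'(\beta)=2[h(\alpha_1+\beta)-h(2\beta)]$, proving that $J'$ is strictly increasing on $(\beta_{\min},\alpha_1]$ and strictly decreasing on $[\alpha_1,0]$. A one-line Cauchy--Schwarz on the vectors $(e^{\alpha_1 m})_m$ and $(e^{\beta m})_m$ independently confirms $J'(\beta)\leq J'(\alpha_1)$.

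With unimodality in hand, the refinement argument is the classical bracketing one. Let $\beta_{n^*}(\ell)$ be the atom maximising $J'$ at level $\ell$, and let $\beta_{n^*-1}(\ell),\beta_{n^*+1}(\ell)$ be its sorted neighbours in the current grid. Strict unimodality forces $\alpha_1\in[\beta_{n^*-1}(\ell),\beta_{n^*+1}(\ell)]$, for otherwise the neighbour on the side of $\alpha_1$ would yield a strictly larger $J'$ than $\beta_{n^*}(\ell)$, contradicting optimality. Algorithm~\ref{algo:dec_refined} then inserts $\eta$ new atoms in each of the two subintervals $[\beta_{n^*-1}(\ell),\beta_{n^*}(\ell)]$ and $[\beta_{n^*}(\ell),\beta_{n^*+1}(\ell)]$, so the width of the bracket containing $\alpha_1$ is divided by at least $\eta+1$ at each step. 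Hence $|\beta_{n^*}(\ell)-\alpha_1|\to 0$ at a geometric rate, and since $\alpha_1\in(\beta_{\min},0]$ is covered by the initial grid, the sequence converges to $\alpha_1$.

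The hard part will be the unimodality step: the algebraic form (\ref{eq:Jdamp}) looks intimidating and a brute-force derivative computation would be tedious and prone to sign errors. The trick of recognising $\log s(\gamma)$ as the cumulant generating function of a tilted discrete distribution on $\{0,\ldots,M-1\}$ converts stationarity into a monotone scalar equation and disposes of the difficulty in a line. Once unimodality is secured, the refinement convergence reduces to the familiar analysis underlying bisection-type searches, and no fineness condition on the initial grid is required.
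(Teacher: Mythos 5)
Your proof follows essentially the same route as the paper's: show that $J'(\beta)$ in (\ref{eq:Jdamp}) is strictly increasing for $\beta<\alpha_1$ and strictly decreasing for $\beta>\alpha_1$, so that it is unimodal with its unique maximum at $\beta=\alpha_1$, and then conclude that the refinement brackets and converges to $\alpha_1$ whenever $\alpha_1\in(\beta_{\min},0]$. The only difference is that the paper asserts the sign pattern of the derivative as ``easy to check,'' whereas you actually verify it by writing $J'(\beta)=|c_1|^2 s(\alpha_1+\beta)^2/s(2\beta)$ and observing that $h=s'/s$ is the mean of a tilted distribution on $\{0,\ldots,M-1\}$ and hence strictly increasing --- a correct and tidy justification of that step.
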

\begin{proof}
Let $g(\beta_n)$ be the derivative of $J'(\beta_n) $ in~(\ref{eq:Jdamp}) with respect to $\beta_n$. It is easy to check that $g(\beta_n)>0$ for $\beta_n<\alpha_1$, $g(\beta_n)<0$ for $\beta_n>\alpha_1$, and $g(\beta_n)=0$ when $\beta_n=\alpha_1$.
In other words, $J'(\beta_n) $ is monotonically increasing before the maximum reached at $\alpha_1$ and  monotonically decreasing after $\alpha_1$. Therefore, the multigrid algorithm converges to $\alpha_1$ if $\beta_{\min} < \alpha_1$.  
\hfill $\blacksquare$
\end{proof}
As a consequence of Theorem~\ref{theorem:dampMiltigrid}, the initial modal dictionary can be formed using only two points in the damping factor grid: $\beta_1(0) = \beta_{\min}$ and $\beta_2(0) = 0$. 

We can now state that the multigrid algorithm based on two sparse approximations (for frequency and then damping factor) converges in the single tone case under some conditions. \soul Note that in the noisy case when the SNR is sufficiently high, the convergence analysis  is still valid as in the noiseless case, and the proposed multigrid sparse scheme for single tone converges to the global maximum of the Fejér kernel. \fin The extension to the single tone $R$-D modal retrieval problem is straightforward and can be performed  according  to the formulation presented in  Section~\ref{subsec:simul_sparse}. The details of this approach (\emph{STSM: Single Tone Sparse Method}) are presented in Algorithm~\ref{algo:STSM}. The algorithm takes as input a noisy single tone $R$-D signal, and a couple of integers $\eta_\nu$ and $\eta_\alpha$ that correspond respectively to the number of frequency and damping factor atoms to be added on both sides of the corresponding selected ones.  Next, for each dimension $r=1,\ldots,R$, we execute  two tasks to estimate the frequency and then the damping factor: in each step we apply SOMP combined to DICREF algorithm using corresponding dictionaries and taking into account the convergence conditions discussed previously. Then parameters of $a_r$, i.e., $\nu_r$ and $\alpha_r$, are given by the corresponding selected atoms. This approach will be exploited in the next section for the multiple tones case.

\IncMargin{0mm}\begin{algorithm}[t]
\caption{Sparse multigrid method for single tone estimation ($\mr{STSM}$)}
\footnotesize
\DontPrintSemicolon
\SetKwData{Left}{left}\SetKwData{This}{this}\SetKwData{Up}{up} \SetKwFunction{Union}{Union}
\SetKwFunction{FindCompress}{FindCompress} \SetKwInOut{Input}{input}\SetKwInOut{Output}{output}
\Input {A tensor $\ts{Y} \in \mathbb{C}^{M_1\times \cdots \times M_R}$, $(\eta_\nu, \eta_\alpha)\in \mathbb{N}\times\mathbb{N}$ }
\Output{Parameters of the single $R$-D mode: $a_1, \ldots,a_R$}
\BlankLine
{\bf initialization}: $(k_\nu, k_\alpha)=(0,0)$\;
initialize $\vd_\nu^{(0)}$ and $\vd_\alpha^{(0)}$ using $\zeta$ \;
\BlankLine
\For{$r=1:R $} {
	\While{halting criterion false}{
		$k_{\nu}=k_{\nu}+1$\;
		$ \Omega_\nu^{(k_\nu)}  = \mr{SOMP}(\mQ(\vd_\nu^{(k_\nu)},0), \mY_{(r)},\mr{Iter} = 1)$ \;
		$\vd_\nu^{(k_\nu+1)}  = \mr{DICREF}(\vd_\nu^{(k_\nu)}, \Omega_\nu^{(k)}, \eta_\nu)$ \;
		}
	\While{halting criterion false}{
		$k_{\alpha}=k_{\alpha}+1$\;
		$\Omega_\alpha^{(k_{\alpha})}  = \mr{SOMP}\left(\mQ\left(\vd_\nu^{(k_\nu)}(\Omega_\nu^{(k_\nu)}), \vd_\alpha^{(k_\alpha)}\right), \mY_{(r)}\right)$ \;
		$\vd_\alpha^{(k_\alpha+1)}  = \mr{DICREF}(\vd_\alpha^{(k_\alpha)}, \Omega_\alpha^{(k)}, \eta_\alpha)$ 
		}
		$a_r = \exp(\vd_\alpha^{(k_\alpha)}(\Omega_\alpha^{(k_\alpha)}) + 2\pi\, \vd_\nu^{(k_\nu)}(\Omega_\nu^{(k_\nu)}))$
	}	
	\Return{$a_1, \ldots,a_R$}
\label{algo:STSM}
\end{algorithm}

\section{Multiple $R$-D Modes Estimation} \label{sec:multipleRD}

In the multiple tones case, sparse approximation algorithms yield suboptimal solutions when the coherence of the dictionary is high~\cite{Tropp:04}. This is a crucial point because the refinement procedure will increase the coherence with increasing $\ell$, which may prevent convergence even in the noiseless case. In the following, we present a low complexity algorithm that is accurate and robust in the presence of noise. The idea is to begin by an initialization step where $F$ single tone modal signals of order $R-1$ are extracted from the $R$-D signal. Then \fin an iterative technique is proposed \fin to improve this decomposition and estimate the parameters.

It is assumed that the frequencies are distinct in at least  one dimension with $M_r>F$. Then dimensions are permuted such that the dimension with distinct frequencies becomes the first one ($r=1$).  
\vspace{-3mm}
\subsection{\red From Multiple Tones to Multiple Single-Tone Signals  \fin}

According to (\ref{eq:Y-a-i-r}), $\ts{Y}$ can be written as
\begin{align}
\ts{Y} & = \ts{I}_{R+1,F} \argtimes{r=1}{R} \mv{A}_r \con_{R+1} \vc^\t \\   
 & = \ts{S}  \con_1 \mA_1 \label{eq:Y_Ys}
\end{align}
where $\ts{I}_{R+1,F}$ is the diagonal tensor of order $R+1$  and size $F\times F \times \cdots \times F$, containing ones on its diagonal,  and 
\begin{align}
 \ts{S}  = \ts{I}_{R+1,F} \argtimes{r=2}{R} \mv{A}_r \con_{R+1} \vc^\t    
\end{align}
is a complex tensor of order $R$ and size $F \times M_2 \times \cdots M_R$.  Similar expressions are evoked  in, among others, \cite{sun2012accurate}. The new tensor  $\ts{S}$  can also be written as the concatenation of $F$ tensors along the first dimension   
\begin{align}\label{eq:Ys_conc}
 \ts{S}  =   {\ts{S}}_{1} \sqcup_1 {\ts{S}}_{2} \sqcup_1 \cdots \sqcup_1 {\ts{S}}_{F}
\end{align}
where each  ${\ts{S}}_{f},  f=1,\ldots,F$ is a modal $(R-1)$-D signal of size $1\times M_2 \times \cdots \times M_R$ containing a single  $(R-1)$-D tone: 
\begin{equation}
 \ts{S}_{f} =  c_f~\mv{a}_{f,2} \out \mv{a}_{f,3} \out \cdots \out\mv{a}_{f,R}. 
\end{equation}

The singular value decomposition (SVD) of $\widetilde\mY_{(1)}$ yields 
\begin{align}
	\widetilde\mY_{(1)} = \mU \mSigm \mV^\h
	\label{eq_svd_Y1}
\end{align}
where matrices $\mU$ and $\mV$ contain respectively the left and right singular vectors of $\widetilde\mY_{(1)}$, and $\mSigm$ is a diagonal matrix containing the  singular values $\sigma_i, i = 1, \ldots, \min\{M_1, M_1' \}$ sorted in a decreasing order.  As the number of components in $\ts{Y}$ is equal to $F$, then an approximation of $\mY_{(1)}$, denoted  by $\hat\mY_{(1)}$, can be obtained using the first $F$ principal components of the SVD:
\begin{align}\label{eq:svdY1Truncated}
	\hat\mY_{(1)} = \mU_{F} \mSigm_{F} \mV_{F}^\h
\end{align}
where $\mU_{F}$ (resp. $\mV_{F}$) stands for the matrix formed with the first $F$ columns of $\mU$ (resp. $\mV$) and $\mSigm_F = \diag(\sigma_1, \ldots, \sigma_F)$. It can be established from~(\ref{eq:unfoldingYtilde}) and (\ref{eq:svdY1Truncated}) that $\mA_1$ and $\mU_F$ span the same subspace, and thus there exists an unknown nonsingular matrix $\mT$ that satisfies
\begin{equation}\label{eq:A1}
	\mA_1 = \mU_{F} \mT.
\end{equation}
Denote by $\underline\mM$ (resp. $\overline\mM$) the matrix obtained from $\mM$ by deleting the first (resp. last) row. By harnessing the Vandermonde structure of $\mA_1$, there exists a diagonal matrix $\mD$ such that $\underline\mA_1 = \overline\mA_1\mD$. Since  $\underline\mA_1 =  \underline\mU_F \mT$ and $\overline\mA_1 =  \overline\mU_F \mT$, then $\underline\mU_F \mT = \overline\mU_F\mT \mD$, which proves that matrix $\mT$ can be estimated by the eigenvectors of $\underline\mU_{F}^\dag\overline{\mU}_{F}$.

Thereby  $\ts{S}$  can be estimated from the noisy data and $\hat\mA_1$ using equation~(\ref{eq:Y_Ys}) as follows 
\begin{align}
 {\hat{\ts{S}}}  = \widetilde{\ts{Y}} \con_1 \hat\mA_1^\dag \label{eq:Yshat},
\end{align}
 then $\hat{\ts{S}}_{f}, f=1,\ldots,F$ are extracted from  ${\hat{\ts{S}}}$  according to~(\ref{eq:Ys_conc}). \soul Each $\ts{Y}_f = c_f\va_{f,1}\out \cdots \out\va_{f,R}$ can be estimated by $\bar{\ts{Y}}_f^{(0)} =  \hat{\ts{S}}_{f} \con_1 \hat{\va}_{f,1}$. \fin The sparse multigrid algorithm for single tone (STSM) can be applied on each  \soul $\bar{\ts{Y}}_{f}^{(0)}$, $f = 1,\ldots, F$ \fin to estimate the parameters of modes.  \red However, we propose in the following to improve the separated components using an iterative technique. \fin

\vspace{-2mm}
\subsection{\red  Improving the Estimation Accuracy \fin}

\red It is clear from (\ref{eq:Yshat}) that, in the noisy case, the error in estimating $\ts{S}$ (due to the estimation of $\mathbf{A}_1$) will propagate when estimating the parameters $a_{f,2}, \ldots, a_{f,R}$. Hence, we propose to improve iteratively the mode estimates. The following \soul procedure is executed to update estimates at each iteration $i=0,\ldots,K$
\begin{enumerate}
\item apply STSM to estimate $\va_{f,2},\dots,\va_{f,R}, f = 1,\ldots,F$ 
\begin{align}
\{\hat{a}_{f,2}, \ldots, \hat{a}_{f,R}\} & = \mr{STSM}(\bar{\ts{Y}}_{f}^{(i)} , \eta_\nu, \eta_\alpha, r= 2,\ldots,R) 
\label{eq_1proce_df1_afR}
\end{align}
\item estimate $c_f \va_{f,1}, f=1,\ldots,F$ by least squares using the already estimated $\va_{f,2},\dots,\va_{f,R}, f = 1,\ldots,F$
\begin{align}  \red \widehat{{c}_f {\va}_{f,1}} \soul  =  \bar{\mY}_{{f}_{(1)}}^{(i)} \left( (\hat{\va}_{f,R}\kron \cdots \kron \hat{\va}_{f,2})^\t \right)^\dag
\label{eq_2proce_df1_afR}
\end{align}
\item compute  $\hat{\ts{Y}}_{f}^{(i)} $ \begin{align} \hat{\ts{Y}}_{f}^{(i)}  =  \red \widehat{{c}_f {\va}_{f,1}} \soul  \out \hat{\va}_{f,2}\out \cdots \out \hat{\va}_{f,R} 
\label{eq_3proce_df1_afR}
\end{align}
\end{enumerate}
\red
where $\bar{\ts{Y}}_f^{(i)} =  \hat{\ts{Y}}_f^{(i-1)}  + \ts{R}_{f-1}^{(i)}$, $\ts{R}_{(f)}^{(i)} =  \ts{R}_{f-1}^{(i)} + \hat{\ts{Y}}_f^{(i-1)}  -   \hat{\ts{Y}}_f^{(i)}, f= 1,\ldots,F$, $\ts{R}_{0}^{(i)} \eqdef \ts{R}_{F}^{(i-1)}$, and $\ts{R}_{F}^{(0)} = \red \tilde{\ts{Y}} \soul - \sum_{f=1}^F \hat{\ts{Y}}_{f}^{(0)}$.  \soul This iterative scheme will be analyzed in the next section. \fin
\fin

\red Finally, \fin the algorithm we propose  (\emph{MTSM: Multiple Tones Sparse Method})  is summarized in Algorithm~\ref{algo:multipleModes}. Note that no association step of $R$-D modes is required. The initialization step consists in  initializing: i) $\hat\mA_1$ and  $\hat{\ts{S}}$  using (\ref{eq:A1}), (\ref{eq:Yshat}) and (\ref{eq:Ys_conc}), ii)  the estimated single tones \soul $\bar{\ts{Y}}_{f}^{(0)}, f=1,\ldots,F$. \fin 
 Note that the columns of $\hat{\mA}_1$ are iteratively updated without extracting the related modes, whereas the modes of the other dimensions are extracted at each iteration using (\ref{eq_1proce_df1_afR}).  Solely after the last iteration ($i=K$), the parameters of the first dimension are extracted using STSM algorithm. 
 $K$ denotes the maximum number of iterations, which is fixed to 2 in the simulations since no improvement was observed for $K>2$. 

\IncMargin{0mm}\begin{algorithm}[t]
\caption{Sparse multigrid method for multiple tones estimation  (MTSM) }
\footnotesize
\DontPrintSemicolon
\SetKwData{Left}{left}\SetKwData{This}{this}\SetKwData{Up}{up} \SetKwFunction{Union}{Union}
\SetKwFunction{FindCompress}{FindCompress} \SetKwInOut{Input}{input}\SetKwInOut{Output}{output}
\Input {A tensor $\widetilde{\ts{Y}} \in \mathbb{C}^{M_1\times \cdots \times M_R}$, $(\eta_\nu, \eta_\alpha)\in \mathbb{N}\times\mathbb{N}$}
\Output{Parameters of the multiple  $R$-D  modes : $\{a_{f,r}\}_{f=1,r=1}^{F,R}$}
{\bf initialization}: 
\begin{enumerate}
\item Compute $\hat\mA_1$ and  $\hat{\ts{S}}_{f}$, $f=1,\ldots,F$ using (\ref{eq:A1}),  (\ref{eq:Yshat}) and (\ref{eq:Ys_conc})
\item \soul $\bar{\ts{Y}}_{f}^{(0)} =  \hat{\ts{S}}_{f}  \con_1 \hat\va_{f,1}, $ $f=1,\ldots,F$  \fin
\end{enumerate}
\BlankLine
 \soul  For $f=1,\ldots,F$, compute $\hat{\ts{Y}}_{f}^{(0)}$ using ~(\ref{eq_1proce_df1_afR}), (\ref{eq_2proce_df1_afR}) and (\ref{eq_3proce_df1_afR}) \;
		
$\ts{R}_{F}^{(0)} \eqdef \ts{R}_{0}^{(1)} =  \tilde{\ts{Y}}  - \sum_{f=1}^{F}{\hat{\ts{Y}}_f^{(0)}}$\; \fin
		
\For{$i=1:K $}{
\For{$f=1:F $}{ \soul
$\bar{\ts{Y}}_{f}^{(i)} = \hat{\ts{Y}}_{f}^{(i-1)} +  \ts{R}_{f-1}^{(i)}$\;
 compute $\hat{\ts{Y}}_{f}^{(i)}$ using ~(\ref{eq_1proce_df1_afR}), (\ref{eq_2proce_df1_afR}) and (\ref{eq_3proce_df1_afR})\;
		$\ts{R}_{f}^{(i)} = \bar{\ts{Y}}_{f}^{(i)} - \hat{\ts{Y}}_{f}^{(i)}$,   if $f=F$, then  $ \ts{R}_{0}^{(i+1)}  \eqdef \ts{R}_{F}^{(i)} $ \;\fin
	}
	
	}
	For $f=1,\ldots,F$, extract $a_{f,1}$ using \;
		\quad $a_{f,1} = \mr{STSM}(\hat{\ts{Y}}_{f}^{(K)}+\ts{R}_F^{(K)}, \eta_\nu, \eta_\alpha, r=1)$
		
	\Return{$\{\hat{a}_{f,r}\}_{f=1,r=1}^{F,R}$}
	\label{algo:multipleModes}
\end{algorithm} 
\subsection{\red  Analysis of the Algorithm \fin}
\soul Following the separation step described in (\ref{eq_svd_Y1})--(\ref{eq:Yshat}), we can state that the algorithm \red yields the expected solution \soul when the SNR is sufficiently high. We want now to prove that the second \red stage (next iterations), \soul in addition to estimating the parameters from the single tones, 
 is also improving the estimation accuracy. \soul The general idea is inspired from greedy forward/backward sparse approximation, where the solution is refined by adding/removing atoms to/from the set of activated atoms. \soul The improvement of the estimates is stated by the \red following theorem. \soul

\begin{theorem}\label{proposition_algoMTSM} 
Assuming that the noise $\ts{E}$ is sufficiently small such that the ordering of the singular values in $\mSigm$ in (\ref{eq_svd_Y1}) is the same as the ordering of the corresponding singular values when $\ts{E}=0$. 
Using the procedure expressed by~(\ref{eq_1proce_df1_afR}), (\ref{eq_2proce_df1_afR}) and (\ref{eq_3proce_df1_afR}) to  estimate ${\ts{Y}}_{f}$ at iteration $i=0,\ldots,K$ 
\begin{align}
 \hat{\ts{Y}}_f^{(i)}  = \argmin{\ts{X}\in \mathcal{H}}{ \|\bar{\ts{Y}}_f^{(i)}    - \ts{X}} \|
\end{align}
\red where $\mathcal{H} = \{\ts{X} \in\mathbb{C}^{M_1\times\cdots\times M_R} | \ts{X} = \vb_1 \out \vb_2 \out \cdots \out \vb_R,$ $ \mathbf{b}_r \in \mathcal{P} \text{ for } r\neq 1\}$ with $\mathcal{P} = \{ \vv \in \mathbb{C}^{M_r} | \vv = [1, v, \ldots, v^{M_r-1}]^\t,  v = \exp(\beta+j \omega), \beta \in\mathbb{R}^-, \omega \in [0,2\pi)  \} $. \soul 
 Then, at each iteration $i, i=1,\ldots,K$ the residual is decreased: \red
\begin{align}
 \left\|  \tilde{\ts{Y}} - \hat{\ts{Y}}^{(i)} \right\| \leq \left\| \tilde{\ts{Y}} -  \hat{\ts{Y}}^{(i-1)} \right\| 
\label{eq_RFim1_RFi}
\end{align} 
where $ \hat{\ts{Y}}^{(i)} = \sum_{f=1}^F \hat{\ts{Y}}_f^{(i)}$. \soul
\begin{proof}
See Appendix~\ref{appendix_proof_proposition_algoMTSM}. \hfill $\blacksquare$
\end{proof}
\soul
Figure~\ref{fig_MTSM_K2_K0} depicts a comparison between results obtained by the MTSM algorithm with two different values of $K\in\{0,2\}$. The results show that MTSM with the improving step yields accurate estimates as compared to MTSM without the improving step.
\begin{figure}[t]
\centerline{\includegraphics[width=.9\linewidth]{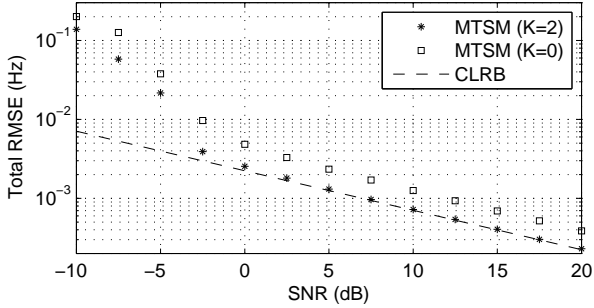}}
\caption{Frequency total root-mean square error for a 3-D signal containing 3 modes with identical modes in two dimensions and close modes in the first dimension (Signal~\#4 in Table~\ref{tab:tab2}). $N(0) = 20, \eta_\nu = 21, \eta_\beta = 11, (M_1,M_2,M_3) = (10,10,10)$. 100 Monte-Carlo.}
\label{fig_MTSM_K2_K0} \vspace{-5mm}
\end{figure}

\end{theorem}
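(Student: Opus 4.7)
The plan is to reduce the theorem to a per-step comparison that exploits the fact that each inner update $\hat{\ts{Y}}_f^{(i)}$ is, by construction, the minimizer over the admissible class $\mathcal{H}$ of the Frobenius distance to $\bar{\ts{Y}}_f^{(i)}$. With that in hand, the rest is bookkeeping of the residual recursion.

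First I would establish the telescoping identity that the ``global residual'' at the end of every outer iteration equals the overall approximation error, i.e. $\ts{R}_F^{(i)} = \tilde{\ts{Y}} - \sum_{f=1}^F \hat{\ts{Y}}_f^{(i)} = \tilde{\ts{Y}} - \hat{\ts{Y}}^{(i)}$. This is an induction on $i$: the base case $i=0$ is the stated definition $\ts{R}_F^{(0)} = \tilde{\ts{Y}} - \sum_f \hat{\ts{Y}}_f^{(0)}$, and the inductive step is obtained by unrolling $\ts{R}_f^{(i)} = \bar{\ts{Y}}_f^{(i)} - \hat{\ts{Y}}_f^{(i)} = \hat{\ts{Y}}_f^{(i-1)} + \ts{R}_{f-1}^{(i)} - \hat{\ts{Y}}_f^{(i)}$ from $f=1$ up to $f=F$ and using the convention $\ts{R}_0^{(i)} \equiv \ts{R}_F^{(i-1)}$.

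Second, I would prove the per-step monotonicity $\|\ts{R}_f^{(i)}\| \leq \|\ts{R}_{f-1}^{(i)}\|$ for every inner index $f$ and every $i \geq 1$. The key observation is that the previous iterate $\hat{\ts{Y}}_f^{(i-1)}$ lies in $\mathcal{H}$, since it is itself a rank-one outer product of Vandermonde-type vectors produced at the preceding sweep (or by the initialization). Evaluating the STSM objective at this feasible point therefore yields
\begin{equation*}
\|\ts{R}_f^{(i)}\| = \|\bar{\ts{Y}}_f^{(i)} - \hat{\ts{Y}}_f^{(i)}\| \leq \|\bar{\ts{Y}}_f^{(i)} - \hat{\ts{Y}}_f^{(i-1)}\| = \|\ts{R}_{f-1}^{(i)}\|,
\end{equation*}
where the last equality is just the definition $\bar{\ts{Y}}_f^{(i)} = \hat{\ts{Y}}_f^{(i-1)} + \ts{R}_{f-1}^{(i)}$. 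Chaining this inequality for $f = 1, \ldots, F$ gives $\|\ts{R}_F^{(i)}\| \leq \|\ts{R}_0^{(i)}\| = \|\ts{R}_F^{(i-1)}\|$, which combined with the identity from the first step is exactly (\ref{eq_RFim1_RFi}).

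The main obstacle I anticipate is justifying that STSM returns a true minimizer of $\|\bar{\ts{Y}}_f^{(i)} - \ts{X}\|$ over $\mathcal{H}$ (as asserted in the theorem statement), rather than only a greedy or local one: if STSM is treated as a black-box optimizer of the stated criterion, the above argument goes through verbatim, but one must still argue that the multigrid refinement produces grids rich enough that the previous iterate $\hat{\ts{Y}}_f^{(i-1)}$ remains representable in $\mathcal{H}$ at iteration $i$. This is natural if the frequency and damping grids are only refined, never coarsened, between iterations. Once these feasibility issues are dispensed with, the residual-decrease claim follows from the two steps above and is essentially the same contraction argument used in greedy forward/backward sparse refinements.
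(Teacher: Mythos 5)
Your proposal is correct and follows essentially the same route as the paper's Appendix B: both arguments hinge on writing $\|\ts{R}_f^{(i)}\|$ and $\|\ts{R}_{f-1}^{(i)}\|$ as distances from the common target $\bar{\ts{Y}}_f^{(i)}$ to the new and previous estimates, invoking feasibility of $\hat{\ts{Y}}_f^{(i-1)}\in\mathcal{H}$ together with the minimizer property to get $\|\ts{R}_f^{(i)}\|\leq\|\ts{R}_{f-1}^{(i)}\|$, and chaining over $f$ with the telescoping identity $\ts{R}_F^{(i)}=\tilde{\ts{Y}}-\hat{\ts{Y}}^{(i)}$. The caveat you flag (treating STSM as an exact minimizer over $\mathcal{H}$) is likewise assumed, not proved, in the paper.
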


   \fin

\vspace{-3mm}
\subsection{\soul Identifiability \fin}
 \soul Based on the assumptions under which Algorithm~\ref{algo:multipleModes} is operating, the identifiability condition can be stated as $F < M_1$ and $\min\{M_2,....,M_R\} \geq 2$.
 \soul In \cite{jiang2001almost}, the condition is $M_r \geq 4, r=1,\ldots, R$, and $F \leq \left\lfloor \frac{M_1}{2}\right\rfloor\prod_{r=1}^R\left\lceil\frac{M_r}{2}\right\rceil$. 

We note that, when $M_r \geq 4, r=1,\ldots, R$, the number of identifiable modes is slightly smaller than in \cite{jiang2001almost}, but the proposed algorithm is able to outperform the conventional methods in terms of computational complexity and accuracy.  In addition, another advantage of the proposed algorithm is clear when the number of samples in one or more dimensions is less than 4 (i.e. $M_r < 4$), where identifiability in~\cite{jiang2001almost} is not satisfied. This latter case (i.e. $\exists r, M_r < 4$) can be encountered in signal processing applications when the size of one or more diversities (dimensions in our formulation problem) is less than~4. \fin

\vspace{-0mm}

\section{Cram\'er-Rao Lower Bounds for $R$-D Cisoids in Noise} \label{sec:crlb}

In this section, we derive the expressions of the CRLB for the parameters of $R$-D damped exponentials in Gaussian white noise. We then give the CRLB in the cases of single damped and undamped $R$-D cisoids. We consider the $R$-D sinusoidal model given in (\ref{eq:rd_signal_model}). Let
\begin{align*}
\boldsymbol{\theta}=[\omega_{1,1}\ldots\omega_{1,R} & \quad \omega_{2,1}\ldots\omega_{F,R}\quad \alpha_{1,1}\ldots\alpha_{1,R}\\ &\quad \alpha_{2,1}\ldots\alpha_{F,R} \quad \lambda_1\ldots\lambda_F\quad \phi_1\ldots\phi_F]^\t
\end{align*}
be the unknown parameter vector. The aim here is to derive the CRLB of the parameters in $\boldsymbol{\theta}$.

The joint probability density function (pdf) of $\tilde{\mv{y}}$ is
\begin{equation}
p(\tilde{\mv{y}};\boldsymbol{\theta}) = \frac{1}{(\sigma^2\pi)^M}\exp\left\{-\frac{1}{\sigma^2}(\tilde{\mv{y}}-\boldsymbol{\mu}(\boldsymbol{\theta}))^\h(\tilde{\mv{y}}-\boldsymbol{\mu}(\boldsymbol{\theta}))\right\} \label{eq:jointpdf}
\end{equation}
where $\boldsymbol{\mu}(\boldsymbol{\theta})$ is the noise-free part of $\mathbf{y}$ and
\begin{align}
\tilde{\mv{y}}=&[\tilde{y}(1,\ldots,1,1),\ldots, \tilde{y}(1,\ldots,1,M_R),\nonumber\\
 &\tilde{y}(1,\ldots,2,1),\ldots, \tilde{y}(1,\ldots,2,M_R),\nonumber\\
 &\ldots, \tilde{y}(M_1,\ldots,M_R)]^\t.
\end{align}
The $i$th entry of $\boldsymbol{\mu}(\boldsymbol{\theta})$ can be written as:
\begin{equation}
\boldsymbol{\mu}(\boldsymbol{\theta})_i = \sum_{f=1}^F c_f \prod_{r=1}^R a_{f,r}^{t_{i,r}},
\end{equation}
for $i=1,\ldots,M$, where
\begin{equation}
t_{i,r} = \left\lfloor \frac{i-1}{\prod_{\ell=r+1}^{R}M_\ell} \right\rfloor \quad \mod \quad M_r,
\end{equation}
and $\lfloor\cdot\rfloor$ is the floor function. In the following, we derive the expressions of the CRLB in the general case ($F>1$) and then we deduce the result corresponding to a single $R$-D modal signal ($F=1$).

\subsection{Derivation of the CRLB}

Given the joint pdf in (\ref{eq:jointpdf}), the $(k,l)$ entry of the Fisher information matrix is~\cite{Yao:95,Kay:93}:
\begin{equation}
[\mathbf{F}(\boldsymbol{\theta})]_{kl} = \frac{2}{\sigma^2} \mathrm{Re} \left\{ \left[ \frac{\partial\boldsymbol{\mu}(\boldsymbol{\theta})}{\partial\theta_k} \right]^\h \frac{\partial\boldsymbol{\mu}(\boldsymbol{\theta})}{\partial\theta_l}\right\}.
\end{equation}
We now express the derivatives $\partial\boldsymbol{\mu}(\boldsymbol{\theta})_i/\partial\theta_k$ for $i=1,\ldots,M$ and $k=1,\ldots,2RF+2F$.
\begin{itemize}
\item For $k=1,\ldots,RF$, we have
\begin{equation}
\frac{\partial\boldsymbol{\mu}(\boldsymbol{\theta})_i}{\partial\theta_k} = jt_{i,r(k)} c_{f(k)} \prod_{r=1}^R a_{f(k),r}^{t_{i,r}}
\end{equation}
with $r(k)=[(k-1)\,\mod\,R]+1$ and $f(k)=\lfloor(k-1)/R\rfloor+1$.
\item For $k=RF+1,\ldots,2RF$:
\begin{equation}
\frac{\partial\boldsymbol{\mu}(\boldsymbol{\theta})_i}{\partial\theta_k} = t_{i,r(k)} c_{f(k)} \prod_{r=1}^R a_{f(k),r}^{t_{i,r}}
\end{equation}
with $r(k)=[(k-RF-1)\,\mod\,R]+1$ and $f(k)=\lfloor(k-RF-1)/R\rfloor+1$.
\item For $k=2RF+1,\ldots,2RF+F$:
\begin{equation}
\frac{\partial\boldsymbol{\mu}(\boldsymbol{\theta})_i}{\partial\theta_k} = e^{j\phi_{f(k)}} \prod_{r=1}^R a_{f(k),r}^{t_{i,r}}
\end{equation}
where $f(k)=k-2RF$.
\item For $k=2RF+F+1,\ldots,2RF+2F$:
\begin{equation}
\frac{\partial\boldsymbol{\mu}(\boldsymbol{\theta})_i}{\partial\theta_k} = jc_{f(k)} \prod_{r=1}^R a_{f(k),r}^{t_{i,r}}
\end{equation}
where $f(k)=k-2RF-F$.
\end{itemize}
Hence, the $M\times (2RF+2F)$ matrix $\partial\boldsymbol{\mu}(\boldsymbol{\theta})/\partial\boldsymbol{\theta}$ expresses as
\begin{align}
\frac{\partial\boldsymbol{\mu}(\boldsymbol{\theta})}{\partial\boldsymbol{\theta}} 
 &= \underbrace{[j\mathbf{Z}'\boldsymbol{\Phi}\quad \mathbf{Z}'\boldsymbol{\Phi}\quad \mathbf{Z}\boldsymbol{\phi}\quad j\mathbf{Z}\boldsymbol{\phi}]}_{\mathbf{V}}\cdot \underbrace{\textrm{blkdiag}(\boldsymbol{\Lambda},\boldsymbol{\Lambda},\mathbf{I}_F,\boldsymbol{\lambda})}_{\mathbf{S}}
\end{align}
where
\begin{align}
\mathbf{Z}' & = [\mathbf{Z}'_1,\ldots,\mathbf{Z}'_F]\in\mathbb{C}^{M\times RF},\text{with } \mathbf{Z}'_f(i,l)=t_{i,l}\prod_{r=1}^{R} a_{f,r}^{t_{i,r}},\label{eq:Z'}\\
\boldsymbol{\Lambda} &= \mathrm{blkdiag}(\lambda_1\mathbf{I}_R,\ldots,\lambda_F\mathbf{I}_R)\in\mathbb{R}^{RF\times RF},\\
\boldsymbol{\Phi} &= \mathrm{blkdiag}(e^{j\phi_1}\mathbf{I}_R,\ldots,e^{j\phi_F}\mathbf{I}_R)\in\mathbb{C}^{RF\times RF},\\
\mathbf{Z} & = [\mathbf{z}_1,\ldots,\mathbf{z}_F]\in\mathbb{C}^{M\times F},\text{with } \mathbf{z}_f(i)=\prod_{r=1}^{R}a_{f,r}^{t_{i,r}},\label{eq:Z}\\
\boldsymbol{\lambda} &= \mathrm{diag}([\lambda_1,\ldots,\lambda_F])\in\mathbb{R}^{F\times F},\\
\boldsymbol{\phi} &= \mathrm{diag}([e^{j\phi_1},\ldots,e^{j\phi_F}])\in\mathbb{C}^{F\times F}.
\end{align}
Finally, the inverse of the Fisher information matrix is
\begin{equation}
\mathbf{F}^{-1}(\boldsymbol{\theta}) = \frac{\sigma^2}{2}\mathbf{S}^{-1} \left[\mathrm{Re}\{\mathbf{V}^\h\mathbf{V}\}\right]^{-1} \mathbf{S}^{-1} = \frac{\sigma^2}{2}\mathbf{S}^{-1} \mathbf{W} \mathbf{S}^{-1}
\end{equation}
where $\textrm{Re}\{\cdot\}$ stands for the real part. The CRLB of $\theta_k$ is given by $[\mathbf{F}^{-1}(\boldsymbol{\theta})]_{kk}$. More explicitly, for $f=1,\ldots,F$ and $r=1,\ldots,R$:
\begin{align}
\mathrm{CRLB}(\omega_{f,r}) &= \frac{2\sigma^2\mathbf{W}_{R(f-1)+r,R(f-1)+r}}{\lambda_f^2}\\
\mathrm{CRLB}(\alpha_{f,r}) &= \frac{2\sigma^2\mathbf{W}_{RF+R(f-1)+r,RF+R(f-1)+r}}{\lambda_f^2}\\
\mathrm{CRLB}(\lambda_{f}) &= 2\sigma^2\mathbf{W}_{2RF+f,2RF+f}\\
\mathrm{CRLB}(\phi_{f}) &= \frac{2\sigma^2\mathbf{W}_{2RF+F+f,2RF+F+f}}{\lambda_f^2}
\end{align}

\begin{theorem}
For the general $R$-D exponential process, the CRLB's for $f=1,\ldots,F$ and $r=1,\ldots,R$ satisfy
\begin{align}
\mathrm{CRLB}(\omega_{f,r}) &= \mathrm{CRLB}(\alpha_{f,r})\\
\mathrm{CRLB}(\lambda_{f}) &= \lambda^2 \mathrm{CRLB}(\phi_{f})
\end{align}
\end{theorem}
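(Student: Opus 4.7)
The plan is to exhibit an orthogonal symmetry of the Fisher information matrix that swaps the block corresponding to the $\omega$-parameters with the block corresponding to the $\alpha$-parameters, and the $\lambda$-block with the $\phi$-block. Once this invariance is transferred to the inverse $\mathbf{W}$, the diagonal entries match in pairs, which is exactly what the two claimed identities require.

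First I would compute the blocks of $M \eqdef \mathrm{Re}\{\mathbf{V}^\h\mathbf{V}\}$ in the $4\times 4$ layout induced by $\mathbf{V} = [\,j\mathbf{A},\ \mathbf{A},\ \mathbf{B},\ j\mathbf{B}\,]$, where $\mathbf{A}=\mathbf{Z}'\boldsymbol{\Phi}$ and $\mathbf{B}=\mathbf{Z}\boldsymbol{\phi}$. Splitting
$\mathbf{A}^\h\mathbf{A} = P + jQ$, $\mathbf{A}^\h\mathbf{B} = G + jH$, $\mathbf{B}^\h\mathbf{B} = R + jS$
into real and imaginary parts (so $P^\t=P$, $R^\t=R$ while $Q^\t=-Q$, $S^\t=-S$), direct multiplication yields
\[
M = \begin{bmatrix} P & Q & H & G \\ -Q & P & G & -H \\ H^\t & G^\t & R & -S \\ G^\t & -H^\t & S & R \end{bmatrix}.
\]

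Next I would introduce the orthogonal block matrix $J = \mathrm{blkdiag}(J_1,J_2)$ with
\[
J_1 = \begin{bmatrix} 0 & -I_{RF} \\ I_{RF} & 0 \end{bmatrix}, \qquad J_2 = \begin{bmatrix} 0 & I_F \\ -I_F & 0 \end{bmatrix},
\]
and verify by direct multiplication that $J^\t M J = M$. The check splits into three short pieces: $J_1^\t \bigl[\begin{smallmatrix} P & Q \\ -Q & P \end{smallmatrix}\bigr] J_1 = \bigl[\begin{smallmatrix} P & Q \\ -Q & P \end{smallmatrix}\bigr]$ (which uses only $P^\t=P$ and $Q^\t=-Q$), the analogous identity $J_2^\t \bigl[\begin{smallmatrix} R & -S \\ S & R \end{smallmatrix}\bigr] J_2 = \bigl[\begin{smallmatrix} R & -S \\ S & R \end{smallmatrix}\bigr]$, and the cross-block identity $J_1^\t \bigl[\begin{smallmatrix} H & G \\ G & -H \end{smallmatrix}\bigr] J_2 = \bigl[\begin{smallmatrix} H & G \\ G & -H \end{smallmatrix}\bigr]$. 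Each reduces to bookkeeping with signs.

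Finally, since $J$ is orthogonal, the invariance transfers immediately to $\mathbf{W} = M^{-1}$: $J^\t \mathbf{W} J = \mathbf{W}$. Writing $\mathbf{W}$ in the same $4\times 4$ block form, the constraint $J_1^\t W_{11} J_1 = W_{11}$ applied to $W_{11} = \bigl[\begin{smallmatrix} A & B \\ C & D \end{smallmatrix}\bigr]$ gives $A=D$, $B=-C$; in particular the diagonals of the two $RF\times RF$ diagonal subblocks of $W_{11}$ coincide entry by entry, which reads $\mathbf{W}_{R(f-1)+r,\,R(f-1)+r} = \mathbf{W}_{RF+R(f-1)+r,\,RF+R(f-1)+r}$. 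An identical argument with $J_2$ acting on $W_{22}$ yields $\mathbf{W}_{2RF+f,\,2RF+f} = \mathbf{W}_{2RF+F+f,\,2RF+F+f}$. Plugging these two identities into the explicit CRLB expressions stated just above the theorem produces $\mathrm{CRLB}(\omega_{f,r}) = \mathrm{CRLB}(\alpha_{f,r})$ and $\mathrm{CRLB}(\lambda_f) = \lambda_f^2\,\mathrm{CRLB}(\phi_f)$.

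The main obstacle is the block-by-block check of $J^\t M J = M$: it is entirely mechanical but requires careful tracking of the sign flips coming from $Q^\t=-Q$ and $S^\t=-S$. Once the invariance is established, the rest of the argument is a one-line consequence of $J$ being orthogonal and a trivial reading of the equal diagonal entries forced by the pattern $W_{ii} = \bigl[\begin{smallmatrix} A & B \\ -B & A \end{smallmatrix}\bigr]$.
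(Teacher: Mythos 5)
Your proposal is correct and is essentially a fully worked-out version of the paper's own (one-sentence) proof, which simply appeals to ``the special block structure of $\mathrm{Re}\{\mathbf{V}^\h\mathbf{V}\}$'' and the reference to Yao and Pandit. Your block computation of $M=\mathrm{Re}\{\mathbf{V}^\h\mathbf{V}\}$, the orthogonal conjugation $J^\t M J=M$ (including the necessary cross-block check $J_1^\t M_{12}J_2=M_{12}$, without which the argument on the inverse would be incomplete), and the transfer to $\mathbf{W}=M^{-1}$ all check out and yield exactly the two claimed identities.
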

\begin{proof}
It is based on the special block structure of matrix $\mathrm{Re}\{\mathbf{V}^\h\mathbf{V}\}$ (see for instance~\cite{Yao:95}).
\end{proof}

\subsection{Single Mode Case}
In this section, the CRLB's will be simplified in the case of a single $R$-D modal signal ($F=1$) to obtain more precise details on their parameter dependency. For the sake of simplicity, the subscripts denoting the mode $f=1$ will be omitted. First, assume that $|a_r|=\exp(\alpha_r)<1$. We shall express the products $\mathbf{Z}'^\h\mathbf{Z}'$, $\mathbf{Z}^\h\mathbf{Z}$ and $\mathbf{Z}'^\h\mathbf{Z}$. After some calculations, we get:
\begin{align} 
[\mathbf{Z}'^\h\mathbf{Z}']_{nk} 
 &= \prod_{\substack{r=1 \\ r\neq n,k}}^{R}\left(\frac{1-|a_r|^{2M_r}}{1-|a_r|^2}\right)\nonumber\\
 &\times \begin{cases}
 \displaystyle \sum_{m=0}^{M_n-1}m|a_n|^{2m} \displaystyle \sum_{m=0}^{M_k-1}m|a_k|^{2m}, & \text{if } n\neq k\\
 \displaystyle \sum_{m=0}^{M_n-1}m^2|a_n|^{2m}, & \text{if } n=k
 \end{cases}\\
\mathbf{Z}^\h\mathbf{Z} 
 &= \prod_{r=1}^{R}\left(\frac{1-|a_r|^{2M_r}}{1-|a_r|^2}\right)\\
[\mathbf{Z}'^\h\mathbf{Z}]_n 
 &= \prod_{\substack{r=1 \\ r\neq n}}^{R}\left(\frac{1-|a_r|^{2M_r}}{1-|a_r|^2}\right)\times \sum_{m=0}^{M_n-1}m|a_n|^{2m}.
\end{align}
Denoting $M^{(\alpha)}=\prod_{r=1}^{R}(1-|a_r|^{2M_r})/(1-|a_r|^2)$, $q_1(n)=\sum_{m=0}^{M_n-1}m|a_n|^{2m}/\sum_{m=0}^{M_n-1}|a_n|^{2m}$ and $q_2(n)=\sum_{m=0}^{M_n-1}m^2|a_n|^{2m}/\sum_{m=0}^{M_n-1}|a_n|^{2m}$, we then obtain:
\begin{align}
[\mathbf{P}]_{nk} &= M^{(\alpha)}\times \begin{cases}
 q_1(n)q_1(k), &\text{if } n\neq k\\
 q_2(n), & \text{if } n=k
 \end{cases}\\
\mathbf{G} &= M^{(\alpha)}\\
[\mathbf{Q}]_n &= M^{(\alpha)} q_1(n),
\end{align}
and
\begin{equation}
\mathrm{Re}\{\mathbf{V}^\h\mathbf{V}\} = \begin{bmatrix}
\mathbf{P} & 0 & 0 & \mathbf{Q}\\
0 & \mathbf{P} & \mathbf{Q} & 0\\
0 & \mathbf{Q}^\t & \mathbf{G} & 0\\
\mathbf{Q}^\t & 0 & 0 & \mathbf{G}
\end{bmatrix}.
\end{equation}
The inversion of $\mathrm{Re}\{\mathbf{V}^\h\mathbf{V}\}$ yields the following expressions of the CRLB's:
\begin{multline}
\mathrm{CRLB}(\omega_{r}) =\mathrm{CRLB}(\alpha_{r}) =\frac{\sigma^2}{2\lambda^2 M^{(\alpha)}}  \\ 
 \quad \times \frac{(1-|a_r|^2)^2(1-|a_r|^{2M_r})^2}{[-M_r^2|a_r|^{2M_r}(1-|a_r|^2)^2+|a_r|^2(1-|a_r|^{2M_r})^2]},
\end{multline}
\begin{multline}
\frac{\mathrm{CRLB}(\lambda)}{\lambda^2} =\mathrm{CRLB}(\phi) =\frac{\sigma^2}{2\lambda^2 M^{(\alpha)}} \\
\quad \times \left(1+\sum_{r=1}^R \frac{q_1^2(r)}{q_2(r)-q_1^2(r)}\right).
\end{multline}
Finally, for a single $R$-D purely harmonic signal ($\alpha_r=0,\forall r$), we have $M^{(\alpha)}=\prod_{r=1}^R M_r=M$ and taking the limit of the CRLB's when $\alpha_r\to0$ leads to:
\begin{align}
\lim_{\alpha_r\to 0} \mathrm{CRLB}(\omega_{r}) &= \frac{6\sigma^2}{\lambda^2 M (M_r^2-1)}\label{eq:crlb-w1}\\
\lim_{\alpha_r\to 0} \frac{\mathrm{CRLB}(\lambda)}{\lambda^2} &= \frac{\sigma^2}{2\lambda^2 M}\left(1+3\sum_{r=1}^R \frac{M_r-1}{M_r+1}\right).
\end{align}
Hence, for the undamped case, our result in (\ref{eq:crlb-w1}) is consistent with~\cite{sun2012accurate}.

\section{Simulation Results} \label{sec:simul}


Numerical simulations have been carried out to assess the performances of the proposed method for 2-D and 3-D modal signals in the presence of white Gaussian noise. The performances are measured by the total root-mean square error (RMSE) on estimated parameters and the computational time. The total RMSE is defined as
$\mathrm{RMSE}_{\mathrm{total}}=\sqrt{\frac{1}{RF} \mathbb{E}_p\left\{ \sum_{r=1}^R\sum_{f=1}^F(\xi_{f,r}-\hat{\xi}_{f,r})^2\right\} }$
where $\hat{\xi}_{f,r}$ is an estimate  of $\xi_{f,r}$, and $\mathbb{E}_p$ is the average on $p$ Monte-Carlo trials. In our simulations, $\xi_{f,r}$ can be either a frequency or a damping factor.

\subsection{RMSE for 2-D and 3-D Signals}

\begin{experiment} \label{exprim:diff_settings}
to show the interest of the multigrid scheme, this experiment presents the results obtained on Signal \#1 with different multigrid levels and different initial grids. Signal \#1 is a single tone 2-D modal signal of size $10 \times 10$ whose parameters are presented in Table~\ref{table:singleTone_2d}. The number of multigrid levels is fixed to $L=2$, i.e., $\ell = 0,1,2$. Then the results are presented as a function of the number of atoms in the initial dictionaries \soul $N(0)$ \fin and the number of atoms \soul $\eta_\nu$ or $\eta_\alpha$ \fin added at each level $\ell$. 
The results we obtain for the first step, i.e., for the harmonic estimation, are presented in Figure~\ref{fig:N0_eta_nu}. We can observe that the frequency RMSE obtained with the $R$-D sparse algorithm can reach the CRLB using a uniform initial harmonic dictionary of 10 atoms and  $\soul \eta_{\nu} \fin = 31$  (Figure~\ref{fig:N0_eta_nu}.a).  Figure~\ref{fig:N0_eta_nu}.b shows that the frequency RMSE is improved at low SNR if the initial dictionary contains  20  atoms, and reaches the optimal estimates with  $\soul \eta_{\nu} \fin = 21$. 
Figure~\ref{fig:eta_beta} shows the damping factor RMSE obtained by $R$-D sparse algorithm using different settings of the initial damping factor dictionary and \soul$\eta_{\alpha}$\fin.  We can observe that  the damping factor RMSE depends on the number of atoms in the dictionary, the more atoms the better. At low SNR, the RMSE also depends $\beta_{\min}$. Therefore, it is better to choose $\beta_{\min}$ with small absolute value if we have a prior knowledge of the interval of damping factors in the signal. \soul In general, the estimation error is of order $ \frac{1}{N(0) \eta^2}$. For instance, in the frequency step estimation, we recommend to chose $N(0) $ to be greater than or equal  to $\frac{3}{2}M_r$ if we want a good accuracy at lower SNR levels. Otherwise, we can set $N(0)  = M_r$. Once $N(0) $ is set, $\eta$ can be chosen with respect to the desired accuracy. Let $\varepsilon$ be the desired estimation error,  then $\varepsilon = \frac{1}{N(0)  \eta^2}$ and we can set $\eta = \frac{1}{\sqrt{\varepsilon N(0) }}$.   \fin
\end{experiment}
\begin{table} [t]
\caption{2-D parameters of Signal \#1}
\centering
\begin{tabular}{|c||c|c||c|c||c|} 
\hline 
$f$ & $\nu_{f,1}$ & $\alpha_{f,1}$ & $\nu_{f,2}$ & $\alpha_{f,2}$ & $c_f$\\\hline \hline
1 & $0.22$ & $-0.011$ & $0.34$ & $-0.015$ & 1 \\ \hline
\end{tabular}
\label{table:singleTone_2d}
\end{table}

\begin{figure}[t]
\psfrag{N0 = 10, eta = 5}[cc][cc]{\tiny \hspace{-2mm } $N(0) = 10, \eta_\nu = 5$} 
\psfrag{N0 = 10, eta = 11}[cc][cc]{\tiny \hspace{-2mm } $N(0) = 10, \eta_\nu = 11$} 
\psfrag{N0 = 10, eta = 31}[cc][cc]{\tiny \hspace{-2mm } $N(0) = 10, \eta_\nu = 31$} 
\psfrag{N0 = 20, eta = 5}[cc][cc]{\tiny  \hspace{-2mm } $N(0) = 20, \eta_\nu = 5$} 
\psfrag{N0 = 20, eta = 11}[cc][cc]{\tiny \hspace{-2mm } $N(0) = 20, \eta_\nu = 11$} 
\psfrag{N0 = 20, eta = 21}[cc][cc]{\tiny \hspace{-2mm } $N(0) = 20, \eta_\nu = 21$} 
\psfrag{CRLB}[cc][cc]{\tiny CRLB}  
\subfigure[$N(0) = 10$]{\centerline{\epsfig{figure=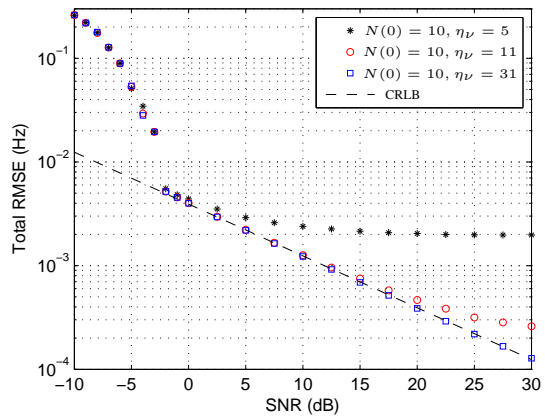, width=1.1\largeur}}} \vspace{-3mm}
\subfigure[$N(0) = 20$]{\centerline{\epsfig{figure=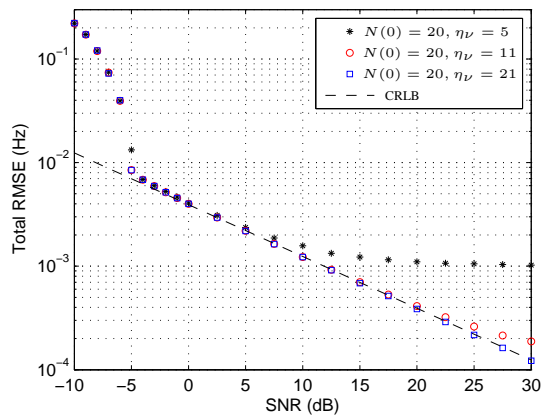, width=1.1\largeur}}}
\caption{Frequency RMSE using $R$-D sparse algorithm with different $\eta_{\nu}$.  2-D signal containing a single tone (Signal \#1). $(M_1,M_2) = (10,10).$ 1000 Monte-Carlo trials. (a) The initial harmonic dictionary contains $N(0) = 10$ atoms, (b) $N(0) = 20$ atoms. }
\label{fig:N0_eta_nu} 
\end{figure}

\begin{figure}[t]
\psfrag{N0 = 04, eta= 05, betaMin = -2}[cc][cc]{\tiny $N(0) = 04,   \eta_{\alpha} = 05, \beta_{\min} = -2$ }
\psfrag{N0 = 20, eta= 21, betaMin = -2}[cc][cc]{\tiny $N(0) = 20,   \eta_{\alpha} = 21, \beta_{\min} = -2$ }
\psfrag{N0 = 10, eta= 11, betaMin = -0.05}[cc][cc]{\tiny $N(0) = 10,   \eta_{\alpha} = 11, \beta_{\min} = -0.05$ }
\psfrag{CRLB}[cc][cc]{\tiny  CRLB }
\centerline{\includegraphics[width=1.1\largeur]{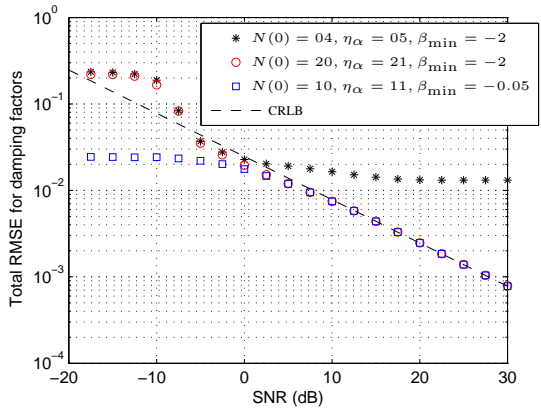}}
\caption{Damping factor RMSE using $R$-D sparse algorithm with different $\eta_{\alpha}$, $\beta_{\min}$ and $N(0)$. 2-D signal containing a single tone (Signal \#1). $(M_1,M_2) = (10,10).$ 1000 Monte-Carlo trials.}
\label{fig:eta_beta}
\end{figure}
 
In the rest of this section, the proposed algorithms are compared with 2-D ESPRIT~\cite{Rouquette2001}, Tensor-ESPRIT~\cite{haardt2008higher}, PUMA~\cite{so2010efficient} and TPUMA~\cite{sun2012accurate}. If the $R$-D signal contains one tone then Algorithm~\ref{algo:STSM}  (STSM)  is used, otherwise Algorithm~\ref{algo:multipleModes}  (MTSM)  is used. Thus, to facilitate notation, both proposed algorithms, Algorithm~\ref{algo:STSM} and Algorithm~\ref{algo:multipleModes}, will be called $R$-D sparse. 
For the proposed method, the initial grid used to build the harmonic dictionary is the same for all dimensions; it contains $50$ frequency points uniformly distributed over the interval $[0,1)$ and 10 damping factors $\beta\in [-0.05, 0]$.  To simulate a random dictionary, at each run, the frequency grid is perturbed by a small random quantity. As a consequence of experiment~\ref{exprim:diff_settings}, we use the following settings $ (L, \eta_\mu, \eta_\beta)= (2,21,11)$. The number of iterations in Algorithm~\ref{algo:multipleModes} is set to $K=2$ because no improvement was observed for $K>2$. 

  Since the proposed method is applied directly on data without using spatial smoothing, i.e., it does not require the construction of a large matrix or an augmented order tensor, then a relevant comparison will be with algorithms that do not use spatial smoothing. Thereby, in the next experiments, the proposed algorithm is compared to PUMA~\cite{so2010efficient} and TPUMA~\cite{sun2012accurate}, which are algorithms that do not require spatial smoothing. We also report comparisons with 2-D ESPRIT~\cite{Rouquette2001} and Tensor-ESPRIT~\cite{haardt2008higher}, which need spatial smoothing. 

\subsubsection{Single tone $R$-D modal signal}
\begin{experiment}
This experiment tends to show the efficiency of the proposed algorithm in estimating parameters of single tone $R$-D modal signals. We simulate a 2-D signal of size $10 \times 10$ (Signal \#1) whose parameters are presented in Table~\ref{table:singleTone_2d}. Our $R$-D sparse algorithm is compared to 2-D ESPRIT~\cite{Rouquette2001} and PUMA~\cite{so2010efficient}. For each level of noise, 1000 Monte-Carlo trials are performed. Figure~\ref{fig:singleTone} shows the obtained results. We can observe that: i) the proposed algorithm and PUMA reach the CRLB and outperform 2-D ESPRIT, ii) $R$-D sparse outperform PUMA in SNR less than 3~dB.     
\end{experiment}
\begin{figure}[t]
\centerline{\includegraphics[width=\largeur]{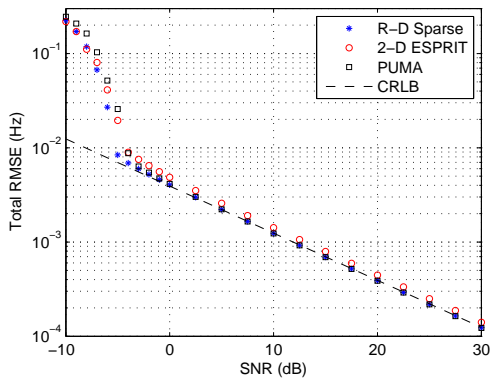}} 
\caption{Frequency total root-mean square error for a 2-D signal containing a single tone (Signal \#1). $(M_1,M_2) = (10,10).$ 1000 Monte-Carlo trials.}
\label{fig:singleTone} \vspace{-3mm}
\end{figure}

\subsubsection{Multiple tones R-D modal signals}
Several configurations are studied in the case of multiple tones to compare the proposed algorithm with Tensor-ESPRIT~\cite{haardt2008higher} and TPUMA~\cite{sun2012accurate}. These configurations \soul(Experiments \ref{experim:3d_twoSeprateModes}, \ref{experim:3d_3Modes_TwoIdentical}, \ref{experim:3modes_Idetical_LessThenFourier}
) \fin are summarized in Table~\ref{tab:experiments_resume}, in which the number of modes and the distance between frequencies in different dimensions are varied. $\Delta_{\mr{Fr}}$ denotes the Rayleigh frequency resolution limit, which has the same value in all dimensions because $M_1= M_2=M_3$. \soul In Experiment~\ref{experim:4modes_sizeLessThanF} we examine the case when the size of only one dimension is larger than 4, i.e., the identifiability condition of \cite{jiang2001almost} is not satisfied. The parameters of the used signals  are given in Table~\ref{tab:tab2}. \fin 
 
\begin{table}[t]
\caption{Different configurations for experiments 3 through 5} \label{tab:experiments_resume} \vspace{-2mm}
\centering
\begin{tabular}{|c||c|c|c|c|} 
\hline 
& $F$ & Dimension 1 & Dimension 2 & Dimension 3 \\ \hline \hline

Exp. \ref{experim:3d_twoSeprateModes} & 2 & $\Delta \nu > \Delta_{\mr{Fr}}$ & $\Delta \nu > \Delta_{\mr{Fr}}$ &$\Delta \nu > \Delta_{\mr{Fr}}$ \\ \hline

Exp. \ref{experim:3d_3Modes_TwoIdentical} & 3 & $\Delta \nu \geq \Delta_{\mr{Fr}}$ & $\exists$ identical modes & $\exists$ identical modes  \\ \hline
Exp. \ref{experim:3modes_Idetical_LessThenFourier} & 3 & $\Delta \nu < \Delta_{\mr{Fr}}$ & $\exists$ identical modes  &  $\exists$ identical modes  \\ \hline
\end{tabular}\vspace{-3mm}
\end{table}
\begin{table}[t]
\caption{3-D parameters of Signal \#2 through \#5} \label{tab:tab2}
\centering
\begin{tabular}{|c||l|l||l|l||l|l||l|} 
\hline 
Signal & $\nu_{f,1}$ & $\alpha_{f,1}$ & $\nu_{f,2}$& $\alpha_{f,2}$ & $\nu_{f,3}$ & $\alpha_{f,3}$ & $c_f$\\ \hline \hline
\#2 & $0.40$ & $-0.01$ & $0.1$ & $-0.01$ & $0.1$ & $-0.01$ & 1 \\ \cline{2-8}
 & $0.20$ & $-0.01$ & $0.3$ & $-0.15$ & $0.25$ & $-0.01$ & 1 \\ \hline\hline
\#3 & $0.30$ & $-0.01$ & $ 0.31 $ & $ -0.01 $ & $0.22$ & $-0.01$ & 1 \\ \cline{2-8}
 & $0.10$ & $-0.01$ & $0.45$ & $-0.015$ & $ 0.11 $ & $ -0.01 $ & 1 \\ \cline{2-8}
 & $0.20$ & $-0.01$ & $ 0.31 $ & $ -0.01 $ & $ 0.11 $ & $ -0.01 $ & 1 \\ \hline\hline
\#4 & $0.28$ & $-0.01$ & $ 0.31 $ & $ -0.01 $ & $0.22$ & $-0.01$ & 1 \\ \cline{2-8}
 & $0.12$ & $-0.01$ & $0.45$ & $-0.015$ & $ 0.11 $ & $ -0.01 $ & 1 \\ \cline{2-8}
 & $0.20$ & $-0.01$ & $ 0.31 $ & $ -0.01 $ & $ 0.11 $ & $ -0.01 $ & 1 \\ \hline\hline
\#5 & $0.30$ & $-0.01$ & $ 0.1 $ & $ -0.01 $ & $0.1$ & $-0.01$ & 1 \\ \cline{2-8}
 & $0.13$ & $-0.01$ & $0.45$ & $-0.015$ & $ 0.4 $ & $ -0.01 $ & 1 \\ \cline{2-8}
 & $0.20$ & $-0.01$ & $ 0.31 $ & $ -0.01 $ & $ 0.1 $ & $ -0.01 $ & 1 \\ \cline{2-8}
 & $0.42$ & $-0.012$ & $0.22$ & $-0.01 $ & $0.32$ & $ -0.01$ & 1 \\ \hline
\end{tabular}\vspace{-3mm}
\end{table}

\begin{experiment}
\label{experim:3d_twoSeprateModes}
In this experiment, we simulate a 3-D signal (Signal \#2) of size $8 \times 8 \times 8$ and containing two modes whose frequencies in each dimension are well separated. Parameters of the signal are given in Table~\ref{tab:tab2}. Figure~\ref{fig:two_3D_modes} shows the obtained results. Here, the proposed method performs as TPUMA. Tensor-ESPRIT yields slightly worse RMSE.  
\end{experiment}

\begin{experiment}\label{experim:3d_3Modes_TwoIdentical}
3-D signal of size $10 \times 10 \times 10$ containing three 3-D modes (Signal \#3). Note that there exists identical modes in two dimensions and frequencies in the first dimension are separated by $1/M_1$. Figure~\ref{fig:3modes_Identical_In_towDimdension} shows the results. In this configuration, TPUMA and Tensor-ESPRIT give similar results and the proposed method performs better for all SNR levels.
\end{experiment}

\begin{experiment}\label{experim:3modes_Idetical_LessThenFourier}
3-D signal of size $10 \times 10 \times 10$ containing three 3-D modes (Signal \#4). Note that there exists identical modes in two dimensions and frequencies in the first dimension are separated by less than $1/M_1$. The results are shown on Figure~\ref{fig:3modes_Identical_In_TwoDimdension_Close}. Here again the proposed $R$-D sparse approach performs better than TPUMA and Tensor-ESPRIT. Observe also that Tensor-ESPRIT outperforms TPUMA in this configuration (close frequencies and identical modes in dimensions 2-3).
\end{experiment}


\soul
\begin{experiment} \label{experim:4modes_sizeLessThanF}
Results on Signal \#5 of size $10\times3\times3$ containing 4 modes  are given in Figure~\ref{fig_signal_10x3x3}. We observe that the proposed method outperforms TPUMA algorithm mainly in low SNR levels. 
\end{experiment} 
\fin

\subsection{Numerical Complexity}
It is known that in the case of 1-D signals of size $M$, OMP costs $\mathrm{O}(NFM)$ in terms of multiplications~\cite{Tropp2010computational}; $F$ is the sparsity (number of components) and $N$ is the number of atoms in the dictionary. For a $M$-measurements $R$-D signal, the complexity of the STSM algorithm over a set of $L$ multigrid levels is $\mr{O}(MN L R)$, assuming that the number of dictionary atoms is maintained constant (equal to $N$) over all levels. Regarding the approach proposed in Algorithm~\ref{algo:multipleModes}, the main operations are the call of STSM and the update of \soul $\widehat{c_f{\va}_{f,1}}  =  \bar{\mY}_{{f}_{(1)}}^{(i)} \left( (\hat{\va}_{f,R}\kron \cdots \kron \hat{\va}_{f,2})^\t \right)^\dag$ 
\fin which has a complexity of  $\mr{O}(M )$ since \soul $\left( (\hat{\va}_{f,R}\kron \cdots \kron \hat{\va}_{f,2})^\t \right)^\dag$ \fin is a row of length $\prod_{r=2}^R M_r$ and \soul $\bar{\mY}_{{f}_{(1)}}^{(i)}$ \fin is a matrix of size $M_1\times \prod_{r=2}^R M_r$. Therefore, the whole complexity of the proposed algorithm is $\mr{O}\left((  N L (F(R-1)K+ 1)+ F K ) M \right)$, which is linear in the number of measurements $M$. The complexity of the Tensor-ESPRIT algorithm with spatial smoothing is mainly related to that of the SVD which is at least $\mathrm{O}(k_tF(R+1)PM)$ where $k_t$ is a constant depending on the implementation of the SVD algorithm. Here $P=\prod_{r=1}^R P_r$ where $\{P_r\}_{r=1}^R$ are design parameters used to get smoothed measurements (see~\cite{haardt2008higher}). The accuracy of the estimates provided by ESPRIT depends on these parameters. Since the optimal value for $P_r$ is a fraction of $M_r$ (\emph{e.g.}~\cite{Lemma:03,Djermoune:2009perturbation,Hua1990}), the complexity of the SVD step is, in fact, $\mathrm{O}(M^2)$. The complexities of PUMA and TPUMA algorithms are $\mr{O}(M^3)$ and $\mr{O}(k_t M(R+F-1))+\sum_{r=1}^R\mr{O}(K(F+1)M_r^3)$, respectively.  Compared to PUMA and TPUMA, the proposed algorithm has an attractive computational complexity for large size signals. \soul Figure~\ref{fig_cpu_time} shows the CPU time results  of the proposed, Tensor-ESPRIT and TPUMA algorithms versus $M_1$ for a $3$-D damped signal containing two modes with $M_2 = M_3 = 4$.  We observe that the proposed method has low computational complexity compared to TPUMA and Tensor-ESPRIT when $M_1$ is large. \fin

\begin{figure}[t]
\centerline{\includegraphics[width=\largeur]{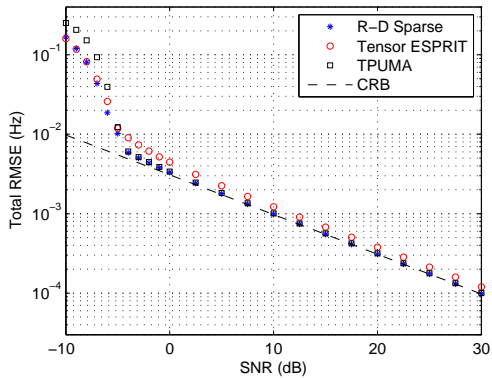}}
\caption{Frequency total root-mean square error for a 3-D signal containing two 3-D modes (Signal \#2). $(M_1,M_2,M_3) = (8,8,8). $ 1000 Monte-Carlo.}
\label{fig:two_3D_modes}
\end{figure}
\begin{figure}[ht]
\centerline{\includegraphics[width=\largeur]{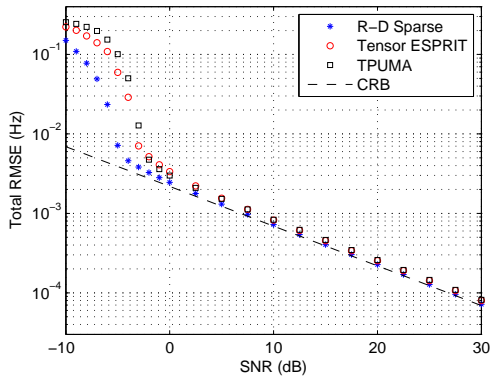}}
\caption{Frequency total root-mean square error for a 3-D signal containing 3 modes with identical modes in two dimensions (Signal \#3). $(M_1,M_2,M_3) = (10,10,10). $ 200 Monte-Carlo.}
\label{fig:3modes_Identical_In_towDimdension}
\end{figure}
\begin{figure}[ht]
\centerline{\includegraphics[width=\largeur]{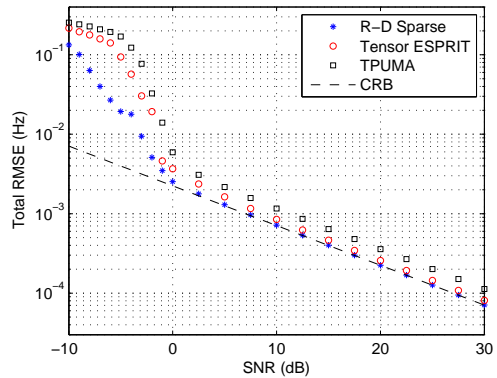}}
\caption{Frequency total root-mean square error for a 3-D signal containing 3 modes with identical modes in two dimensions (Signal \#4), same as Signal \#4 with close modes in the first dimension (0.28,0.12,0.2). $(M_1,M_2,M_3) = (10,10,10). $ 200 Monte-Carlo. }
\label{fig:3modes_Identical_In_TwoDimdension_Close}
\end{figure}

\begin{figure}
\centerline{\includegraphics[width=\largeur]{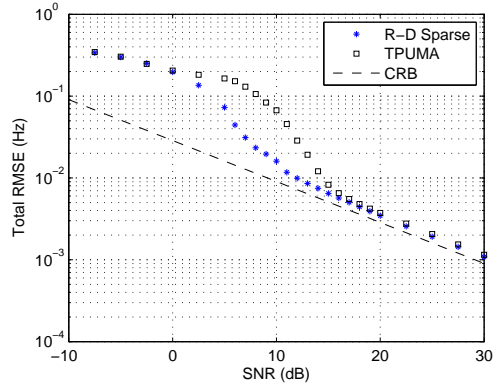}}
\caption {Frequency total root-mean square error for a 3-D signal containing 4 modes with $(M_1,M_2,M_3) = (10,3,3)$ (Signal \#5).  200 Monte-Carlo. }
\label{fig_signal_10x3x3}
\end{figure}

\begin{figure}[t]
\centerline{\includegraphics[width=\largeur]{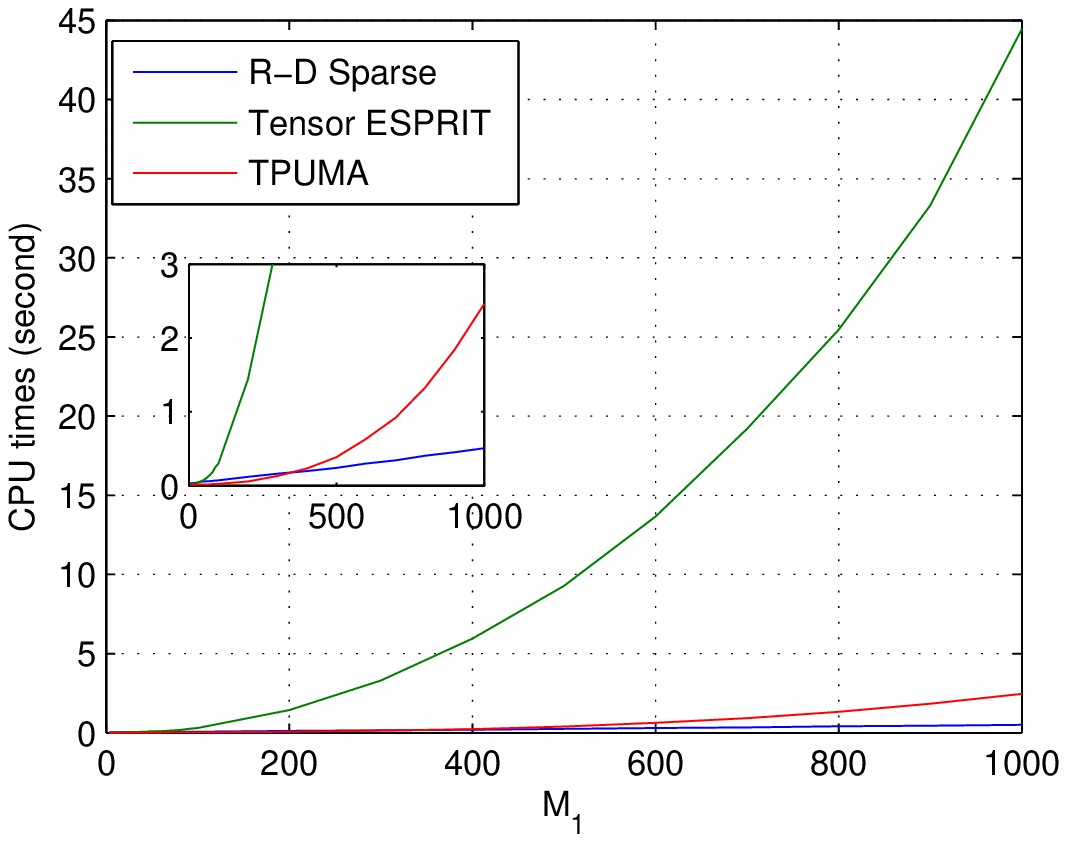}}
\caption {Average CPU time for a single run under $M_2 = M_3 = 4$ and $F = 2$.}\vspace{-3mm}
\label{fig_cpu_time}
\end{figure}

\section{Conclusion}\label{sec:conclusion}

We presented an efficient sparse estimation approach for the analysis of multidimensional ($R$-D) damped or undamped modal signals. The idea consists in exploiting the simultaneous sparse approximation principle to separate this joint estimation problem into $R$ multiple measurements problems. To be able to handle large size signals and yield accurate estimates, a multigrid dictionary refinement scheme  is associated with the simultaneous orthogonal matching pursuit (SOMP) algorithm. We gave the convergence proof of the  the refinement procedure in the single tone case. Then, for the general multiple tones $R$-D case, the signal tensor model is decomposed in order to handle each tone separately in an iterative scheme so that the pairing of the $R$-D parameters is automatically achieved. Also, the CRLB of the $R$-D modal signal parameters were derived. The tests performed on simulated signals showed that the proposed algorithm attains the CRLB and outperforms state-of-the-art subspace algorithms. We also have shown that the complexity of the algorithm is linear with respect to the number of measurements, which allows the processing of large size signals. Finally, it is worth mentioning that this approach can be straightforwardly applied to other multidimensional array processing problems.

\soul
\begin{appendices}
\section{SOMP Algorithm}\label{appendix_SOMP}
In this appendix we report the SOMP method (Algorithm~\ref{algo:s_omp})~\cite{Tropp:06}. In this description, $\mv{j}_{m_2}$ denotes the $m_2$\textsuperscript{th} vector of the canonical basis in $\mathbb{C}^{M_2}$.
\IncMargin{0mm}\begin{algorithm}[h!]
\footnotesize
\DontPrintSemicolon
\SetKwData{Left}{left}\SetKwData{This}{this}\SetKwData{Up}{up} \SetKwFunction{Union}{Union}
\SetKwFunction{FindCompress}{FindCompress} \SetKwInOut{Input}{input}\SetKwInOut{Output}{output}
\soul \Input {A matrix $\mv{Y} \in \mathbb{C}^{M_1\times M_2}$, a matrix $\mv{Q} \in \mathbb{C}^{M_1\times N}$ (with normalized columns)}
\Output{An index set $\Omega$ of activated atoms. A matrix of sparse vectors $\mv{X}\in \mathbb{C}^{N\times M_2}$}
\BlankLine
{\bf initialization}: $k=0,\Omega_0= \varnothing, \mv{X}= \mv{0}$, $\mv{R}_0=\mv{Y}$
\caption{SOMP}
\While{halting criterion false}{
	$k=k+1$\;
	 $n_k \in {\rm arg \ max}_n{\sum_{m_2=1}^{M_2}|\langle \mathbf{R}_{k-1}\mv{j}_{m_2}, \mv{q}_n \rangle|}$ \hspace{3mm} 
$\Omega_k=\Omega_{k-1}\cup\{n_k\}$ \; 
	$\mv{X}_k= (\mv{Q}_{\Omega_k}^{H}\mv{Q}_{\Omega_k})^{-1}\mv{Q}_{\Omega_k}^{H} \mv{Y}$ \hspace{3mm}\; 
	$\mathbf{R}_k = \mathbf{Y} - \mv{Q}_{\Omega_k}\mathbf{X}_k$\;
	}
	\Return{$\Omega= \Omega_k, \mv{X}=\mv{X}_k$} \fin
\label{algo:s_omp}
\end{algorithm}

\vspace{3mm}
\section{Proof of Theorem \ref{proposition_algoMTSM}} \label{appendix_proof_proposition_algoMTSM}
\soul
We begin the proof by introducing the following \red lemma. \soul
\begin{lemma}\label{pro_perturbation}
Consider $\widetilde{\ts{Y}} = \ts{Y} + \Delta{\ts{Y}}$, where $\widetilde{\ts{Y}}$ is the perturbed version of the data tensor $ \ts{Y}$ and $\Delta{\ts{Y}}$ is the perturbation. 
 Assuming that $\Delta\ts{Y}$ is sufficiently small such that the \red ordering of the $F$ \soul singular values in $\mSigm$ in (\ref{eq_svd_Y1}) is the same as the \red ordering \soul of the corresponding singular values when $\Delta\ts{Y}=0$. Then the perturbation $\Delta\ts{Y}_f$ contains a linear combination of all $\ts{Y}_f, f= 1,\ldots,F$:
 \begin{equation*}
  \Delta\ts{Y}_f  = \ts{D}_f  + \sum_{i=1}^F v_{f,i} \ts{Y}_{i}
\end{equation*}
 where $\vv_f^\t = [v_{f,1},\ldots,v_{f,F}] = \Delta\mA_{1}^\dag(f,:)\mA_1$ and $\ts{D}_f = \Delta\ts{Y}  \con_1 \va_{f,1}\mA_{1}^\dag(f,:) +\ts{Y}_{s,f} \con_1 \Delta\va_{f,1}$. 
\end{lemma}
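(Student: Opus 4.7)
The plan is a first-order perturbation expansion of the initialization map $\bar{\ts{Y}}_f^{(0)} = (\widetilde{\ts{Y}}\con_1\hat{\mA}_1^\dag(f,:))\con_1\hat{\va}_{f,1}$ around the noise-free configuration. The ordering hypothesis on the $F$ principal singular values of $\mSigm$ in (\ref{eq_svd_Y1}) guarantees that the truncated SVD, and hence the estimate $\hat{\mA}_1$ obtained from the eigendecomposition of $\underline{\mU}_F^\dag\overline{\mU}_F$, depends smoothly on $\widetilde{\ts{Y}}$ in a neighborhood of $\ts{Y}$. This licenses the linearizations $\hat{\mA}_1^\dag = \mA_1^\dag + \Delta\mA_1^\dag$ and $\hat{\va}_{f,1} = \va_{f,1} + \Delta\va_{f,1}$, and the discarding of products of two perturbations.

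The core step is the product-rule-type identity
\begin{equation*}
\bar{\ts{Y}}_f^{(0)} - \ts{Y}_f = \hat{\ts{S}}_f\con_1\Delta\va_{f,1} + \Delta\ts{S}_f\con_1\va_{f,1},
\end{equation*}
combined with the first-order expansion
\begin{equation*}
\Delta\ts{S}_f \approx \Delta\ts{Y}\con_1\mA_1^\dag(f,:) + \ts{Y}\con_1\Delta\mA_1^\dag(f,:).
\end{equation*}
The key algebraic identity is $\mA_1^\dag\mA_1 = \mI$, which gives $\mA_1^\dag(f,:)\va_{i,1} = \delta_{fi}$; consequently the zeroth-order part of $\bar{\ts{Y}}_f^{(0)}$ collapses to $\ts{Y}_f$, leaving only first-order contributions in $\Delta\ts{Y}_f$. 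Expanding $\ts{Y} = \sum_i \ts{S}_i\con_1\va_{i,1}$ yields $\ts{Y}\con_1\Delta\mA_1^\dag(f,:) = \sum_i v_{f,i}\,\ts{S}_i$ with $v_{f,i} = \Delta\mA_1^\dag(f,:)\va_{i,1}$, i.e.\ $\vv_f^\t = \Delta\mA_1^\dag(f,:)\mA_1$. After recontracting this sum with $\va_{f,1}$ along dimension $1$ and identifying the resulting rank-one tensors with the components $\ts{Y}_i$, one obtains the claimed term $\sum_{i=1}^F v_{f,i}\,\ts{Y}_i$. The two remaining first-order contributions $(\Delta\ts{Y}\con_1\mA_1^\dag(f,:))\con_1\va_{f,1} = \Delta\ts{Y}\con_1(\va_{f,1}\mA_1^\dag(f,:))$ and $\ts{S}_f\con_1\Delta\va_{f,1}$ (which is what $\ts{Y}_{s,f}\con_1\Delta\va_{f,1}$ reduces to at first order) assemble exactly into $\ts{D}_f$.

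The main obstacle is twofold. First, the tensor bookkeeping must be done carefully: when folding $\sum_i v_{f,i}\,\ts{S}_i$ back through $\va_{f,1}$, one must correctly associate the mode-1 factor so that the result matches the rank-one components $\ts{Y}_i = \ts{S}_i\con_1\va_{i,1}$ rather than mixed slices. Second, one must rigorously justify that all cross-terms of order $O(\|\Delta\ts{Y}\|^2)$, such as $\Delta\ts{Y}\cdot\Delta\mA_1^\dag$ and $\Delta\mA_1^\dag\cdot\Delta\va_{f,1}$, are negligible. This second point is where the singular-value ordering assumption is essential: without it, the map $\widetilde{\ts{Y}}\mapsto\hat{\mA}_1$ could be discontinuous because of eigenvector swapping at singular-value crossings, which would invalidate the linearization underlying the whole argument.
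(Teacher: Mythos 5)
Your proposal follows essentially the same route as the paper's proof: differentiate $\ts{S}=\ts{Y}\con_1\mA_1^\dag$ and $\ts{Y}_f=\ts{S}_f\con_1\va_{f,1}$ to first order, use $\ts{Y}=\ts{S}\con_1\mA_1$ to rewrite $\ts{Y}\con_1\Delta\mA_1^\dag(f,:)$ as $\sum_i v_{f,i}\ts{S}_i$ with $\vv_f^\t=\Delta\mA_1^\dag(f,:)\mA_1$, and collect the two remaining first-order terms into $\ts{D}_f$. The mode-1 bookkeeping point you flag (recontracting $\ts{S}_i$ with $\va_{f,1}$ rather than $\va_{i,1}$ when identifying the result with $\ts{Y}_i$) is genuine, but the paper's own derivation makes the same silent identification, so your sketch is at the same level of rigor as the published argument.
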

\begin{proof}
From (\ref{eq:Yshat}) $\red \ts{S} \soul = \ts{Y}\con_1 \mA_1^\dag$, we differentiate and obtain 
\begin{align*} 
\red \Delta\ts{S} \soul & = \Delta\ts{Y}\con_1 \mA_1^\dag + \ts{Y}\con_1 \Delta\mA_1^\dag 
\end{align*}
Then,
\begin{align*} 
 \red \Delta\ts{S}_{f} \soul & =  \Delta\ts{Y}\con_1 \mA_{1}^\dag(f,:) + \red \ts{S} \soul \con_1 \underbrace{\Delta\mA_{1}^\dag(f,:)\mA_1}_{\vv^\t}  \\
& = \Delta\ts{Y}\con_1 \mA_{1}^\dag(f,:) + \red \ts{S} \soul \con_1 \vv^\t \\
& = \Delta\ts{Y}\con_1 \mA_{1}^\dag(f,:) + \sum_{p=1}^F v_{f,p} \red \ts{S}_{p} 
\end{align*}
$ \ts{Y}_f$ is estimated using $ \ts{Y}_f = \red \ts{S}_{f} \soul \con_1 \va_{f,1}$, we differentiate and obtain 
\begin{align*}
  \Delta\ts{Y}_f & = \red \Delta\ts{S}_{f} \soul \con_1 \va_{f,1} + \red \ts{S}_{f} \soul \con_1 \Delta\va_{f,1} \\ 
  & = \sum_{p=1}^F v_{f,p} \ts{Y}_{p} +  \red \ts{S}_{f} \soul \con_1 \Delta\va_{f,1}  + \Delta\ts{Y}  \con_1 \va_{f,1}\mA_{1}^\dag(f,:)
\end{align*} 
\hfill $\blacksquare$
\end{proof}
Using the previous lemma

\begin{align*} \bar{\ts{Y}}_{f}^{(0)}  &= \red \ts{S}_{f} \soul \con_{1}(\va_{f,1}+v_{f,f}\va_{f,1}+\Delta\va_f)+\\
 &  \quad \quad \sum_{p=1,p\neq f}^F v_{f,p} \ts{Y}_{p}+\Delta\ts{Y}  \con_1 \va_{f,1}\mA_{1}^\dag(f,:) 
\end{align*}
Therefore,  $\va_{f,2},\dots,\va_{f,R}, f = 1,\ldots,F$ can be estimated using STSM algorithm since 
\begin{align*} \bar{\mY}_{{f}_{(r)}}^{(0)}  &= c_f \va_{f,r}(\va_{f,R}\kron\cdots \kron \va_{f,r+1}\kron\va_{f,r-1}\kron \\ 
& \quad \quad  \cdots \kron (\va_{f,1}+v_{f,f}\va_{f,1}+\Delta\va_{f,1}))\\
 &  \quad \quad + \left(\sum_{p=1,p\neq f}^F v_{f,p} \ts{Y}_{p}+\Delta\ts{Y}  \con_1 \va_{f,1}\mA_{1}^\dag(f,:)\right)_{(r)}. 
\end{align*}
Since $\bar{\mY}_{{f}_{(1)}}^{(0)}$ has the following form  
\begin{align*} \bar{\mY}_{{f}_{(1)}}^{(0)}  &= c_f (\va_{f,1}+v_{f,f}\va_{f,1}+\Delta\va_{f,1}) (\va_{f,R}\kron\cdots \kron \va_{f,2}) \\
 &  \quad \quad  + \left(\sum_{p=1,p\neq f}^F v_{f,p} \ts{Y}_{p}+\Delta\ts{Y}  \con_1 \va_{f,1}\mA_{1}^\dag(f,:)\right)_{(1)}, 
\end{align*}
we estimate $c_f \va_{f,1}$ by least squares once $\va_{f,2},\dots,\va_{f,R}$ are estimated using STSM 
\begin{align*}  \red \widehat{{c}_f {\va}_{f,1}} \soul & = \min_{\va} \|  \bar{\ts{Y}}_{f}^{(0)} - \va \out \hat{\va}_{f,2}\out \cdots \out \hat{\va}_{f,R}  \| \\
& =  \bar{\mY}_{{f}_{(1)}}^{(0)}  \left( (\hat{\va}_{f,R}\kron \cdots \kron \hat{\va}_{f,2})^\t \right)^\dag
\end{align*}
So, we put $\hat{\ts{Y}}_{f}^{(0)}  =  \red \widehat{{c}_f{\va}_{f,1}} \soul \out \hat{\va}_{f,2}\out \cdots \out \hat{\va}_{f,R} $ and $\ts{R}_f = \bar{\ts{Y}}_{f}^{(0)} - \hat{\ts{Y}}_{f}^{(0)}. $
Therefore, the procedure to  estimate ${\ts{Y}}_{f}$ at iteration $i=0,\ldots,K$ can be summarized  in \soul ~(\ref{eq_1proce_df1_afR}), (\ref{eq_2proce_df1_afR}) and (\ref{eq_3proce_df1_afR}). 
\soul Note that this procedure is optimal because STSM and the least squares are optimal when they are used to estimate  $\va_{f,2},\dots,\va_{f,R}, f = 1,\ldots,F$ and  $c_f \va_{f,1}, f=1,\ldots,F$, respectively.  \soul

Now we present the technique for improving the estimation of ${\ts{Y}}_{f} $. Let $ \ts{R}_F^{(0)}  =  \ts{R}_0^{(1)} = \red\tilde{\ts{Y}} - \sum_{f=1}^F \hat{\ts{Y}}_{f}^{(0)} $ and 
\begin{align}
 \hat{\ts{Y}}_f^{(1)} & = \argmin{\ts{X}\in \red \mathcal{H} \soul}{ \|\hat{\ts{Y}}_f^{(0)}  + \ts{R}_{f-1}^{(1)}  - \ts{X} \| }
 \label{eq_minimum_tsY_f}
\end{align}
where $\ts{R}_{f}^{(1)} = \hat{\ts{Y}}_f^{(0)}  + \ts{R}_{f-1}^{(1)}-   \hat{\ts{Y}}_f^{(1)}, f= 1,\ldots,F$, and $\hat{\ts{Y}}_f^{(i)}$ is an improved estimate of ${\ts{Y}}_f$. We follow the same procedure as described in equations~(\ref{eq_1proce_df1_afR}), (\ref{eq_2proce_df1_afR}) and (\ref{eq_3proce_df1_afR}) to calculate $\hat{\ts{Y}}_f^{(1)}$. 


We can state that there is improvement in the estimation of $\ts{Y}_f$ if  \red
\begin{align}\left\| \tilde{\ts{Y}} - \sum_{f=1}^F  \hat{\ts{Y}}_f^{(1)} \right\| \leq \left\|\tilde{\ts{Y}} - \sum_{f=1}^F  \hat{\ts{Y}}_f^{(0)} \right\|
\label{eq_RF0_RF1}
\end{align} \soul
We have $\|\ts{R}_0^{(1)} \| = \| \bar{\ts{Y}}_{1}^{(0)} - \hat{\ts{Y}}_{1}^{(0)} + \sum_{f=2}^F \ts{R}_f +\red \ts{V} \soul \|$  \red where $\ts{V} = \tilde{\ts{Y}} - \bar{\ts{Y}} $ and $ \bar{\ts{Y}}  = \sum_{f=1}^F \bar{\ts{Y}}_{f}^{(0)}$. It can be verified that \soul 
\small 
$$\|\ts{R}_f^{(1)} \| = \left\| \left(\bar{\ts{Y}}_f^{(0)}  + \sum_{p=1}^{f-1} (\bar{\ts{Y}}_p^{(0)}- \hat{\ts{Y}}_p^{(1)}) +  \sum_{p=f+1}^{F} \ts{R}_p + \red \ts{V} \soul \right) -  \hat{\ts{Y}}_f^{(1)}\right\| $$
$$\|\ts{R}_{f-1}^{(1)} \| = \left\| \left(\bar{\ts{Y}}_f^{(0)}  + \sum_{p=1}^{f-1} (\bar{\ts{Y}}_p^{(0)}- \hat{\ts{Y}}_p^{(1)}) +  \sum_{p=f+1}^{F} \ts{R}_p + \red \ts{V} \soul \right) -  \hat{\ts{Y}}_f^{(0)}\right\| $$
\normalsize
\red However, \soul from equation~(\ref{eq_minimum_tsY_f}), $\hat{\ts{Y}}_f^{(1)}$ is the minimizer with respect to $\ts{X} \red \in \mathcal{H} \soul $ of 
$$  \left\| \left(\bar{\ts{Y}}_f^{(0)}  + \sum_{p=1}^{f-1} (\bar{\ts{Y}}_p^{(0)}- \hat{\ts{Y}}_p^{(1)}) +  \sum_{p=f+1}^{F} \ts{R}_p \right) -  {\ts{X}}\right\| $$
Therefore, $\|\ts{R}_f^{(1)} \| \leq \|\ts{R}_{f-1}^{(1)} \| , f= 1,\ldots,F$. As consequence, $\|\ts{R}_F^{(1)} \| \leq \|\ts{R}_{F}^{(0)} \|$, which we are seeking in expression~(\ref{eq_RF0_RF1}). Similarly, we can prove that \red $\|\ts{R}_F^{(i)} \| \leq \|\ts{R}_{F}^{(i-1)} \|, i > 1,$ using the general forms of $\ts{R}_f^{(i)}$ and $\ts{R}_{f-1}^{(i)}$ 
\small
\begin{align}
\ts{R}_f^{(i)} & = \left(\bar{\ts{Y}}_f^{(0)}  + \sum_{p=1}^{f-1} (\bar{\ts{Y}}_p^{(0)}- \hat{\ts{Y}}_p^{(i)}) +  \sum_{p=f+1}^{f-1} (\bar{\ts{Y}}_p^{(0)}- \hat{\ts{Y}}_p^{(i-1)}) +  \ts{V} \right) \nonumber \\ 
& \quad \quad  -  \hat{\ts{Y}}_f^{(i)}
\end{align}
\begin{align}
\ts{R}_{f-1}^{(i)} & = \left(\bar{\ts{Y}}_f^{(0)}  + \sum_{p=1}^{f-1} (\bar{\ts{Y}}_p^{(0)}- \hat{\ts{Y}}_p^{(i)}) +  \sum_{p=f+1}^{f-1} (\bar{\ts{Y}}_p^{(0)}- \hat{\ts{Y}}_p^{(i-1)}) +  \ts{V} \right) \nonumber \\ 
& \quad \quad  -  \hat{\ts{Y}}_f^{(i-1)}
\end{align}\normalsize
which we are seeking in (\ref{eq_RFim1_RFi}). 
\fin

\end{appendices}

\fin

\section{Acknowledgment }
The authors would like to thank PhD Weize Sun for providing them with the TPUMA algorithm code.


\end{document}